\newcommand{\atm}{\mathit{Atm}}
\newcommand{\dec}{\mathit{Dec}}
\newcommand{\val}{\mathit{Val}}
\newcommand{\finsubseteq}{\subseteq^{\mathrm{fin}} \hspace{-0.05cm}}
\newcommand{\dm}{\big(W, (\equiv_X )_{X \finsubseteq \atm_0}, V \big)}
\newcommand{\allins}{[\emptyset]}
\newcommand{\someins}{\langle \emptyset \rangle}
\newcommand{\M}{\text{$\mathfrak M$}}
\renewcommand{\phi}{\varphi}
\newcommand{\bcl}{\mathsf{BCL}}
\newcommand{\wbcl}{\mathsf{WBCL}}
\newcommand{\bcldc}{\mathsf{BCL}{-}\mathsf{DC}}
\newcommand{\wbcldc}{\mathsf{WBCL}{-}\mathsf{DC}}
\newcommand{\bias}{\text{$\mathsf {Bias}$}}
\newcommand{\pimp}{\text{$\mathsf{PImp}$}}
\newcommand{\pf}{\text{$\mathsf{PF}$}}
\newcommand{\nf}{\text{$\mathsf{NF}$}}
\newcommand{\takevalue}[1]{\mathsf{t}({#1})}
\newcommand{\axp}{\textsf{$\mathsf{AXp}$}}
\newcommand{\cxp}{\textsf{$\mathsf{CXp}$}}
\newcommand{\putaway}[1]{}
\newcommand{\conj}[2]{\mathsf{cn}_{#1{,}#2}}
\newcommand{\maxproxdef}[3]{\mathsf{maxSim}(#1{,}#2{,}#3)}
\newcommand{\apprdec}[2]{\mathsf{apprDec}(#1{,}#2)}
\newcommand{\assign}[2]{#1\!:=\!#2}
 \newcommand{\argmax}[1]{\underset{#1}{\arg\max}}
    \newcommand{\prox}[4]{\mathit{sim}_{#1}(#2{,}#3{,}#4)}
       \newcommand{\distance}[4]{\mathit{dist}_{#1}(#2{,}#3{,}#4)}
        \newcommand{\closest}[4]{\mathit{closest}_{#1}(#2{,}#3{,}#4)}
\newcommand{\tagLabel}[2]{\tag{\textbf{#1}}\label{#2}}
\newbox\itembox
\def\itemlistlabel#1{#1\hfill}
\def\itemlist#1{\setbox\itembox=\hbox{#1}%
                \list{}{\labelwidth\wd\itembox
                             \leftmargin\labelwidth
                             \advance\leftmargin by\itemindent
                             \advance\leftmargin by\labelsep
                             \let\makelabel\itemlistlabel}}
\newtheorem{definition}{Definition}
\newtheorem{proposition}{Proposition}
\newtheorem{theorem}{Theorem}
\newtheorem{corollary}{Corollary}
\newtheorem{lemma}{Lemma}
\newtheorem{fact}{Fact}
\newtheorem{example}{Example}
\begin{document}

\title{A Unified Logical Framework for Explanations in Classifier Systems}

\author{Xinghan Liu$^{1}$ and 
Emiliano Lorini$^{2}$\\
\small $^{1}$IRIT, Toulouse University, France \texttt{xinghan.liu@univ-toulouse.fr}
\\
\small $^{2}$IRIT-CNRS, Toulouse University, France
\texttt{Emiliano.Lorini@irit.fr}
}

\date{}

\maketitle

\begin{abstract}
Recent years have witnessed a renewed interest in
the explanation of classifier systems in the field of explainable AI (XAI).
The standard approach is
based on propositional logic.
We present a modal language 
which supports
reasoning about
binary input classifiers and their properties. 
We study 
a family of classifier models,
axiomatize it as two proof systems regarding the cardinality of the language and show
completeness of our axiomatics. 
Moreover, 
we show that   the satisfiability 
checking 
problem for our modal language is NEXPTIME-complete
in the infinite-variable case, while 
it becomes polynomial
in the finite-variable case. 
We moreover identify an interesting 
NP fragment of our language in the infinite-variable case.
We leverage the language to formalize counterfactual conditional
as well as  a variety  of 
notions of explanation  including  abductive, contrastive and counterfactual explanations, and biases.
Finally,
we present two extensions of our language:
a dynamic extension by the notion
of assignment enabling classifier change
and an epistemic extension in which the classifier's
uncertainty about the actual
input can be represented.


\end{abstract}

\section{Introduction}

The notions of explanation and explainability have been extensively investigated by philosophers
\cite{hempel1948studies,kment2006counterfactuals,woodward2000explanation}
and are key aspects of AI-based systems given the importance of explaining the behavior and prediction of an artificial intelligent system. 
Classifier systems compute a given function in the context of a classification or prediction task. Artificial feedforward neural networks are special kinds of classifier systems aimed at learning or, at least approximating,
the function mapping instances of the input data to their corresponding outputs. 
Explaining why  a  system has classified a given instance in a certain way
is crucial for making the system intelligible and for finding biases in the classification process.
This is  the  main target of  explainable AI (XAI).
Thus, a variety of notions
have been  defined and used to explain classifiers 
including abductive, contrastive and counterfactual explanations \cite{biran2017explanation,wachter2017counterfactual,dhurandhar2018explanations,ignatiev2019abduction,mittelstadt2019explaining,miller2019explanation,mothilal2020explaining,verma2020counterfactual,miller2021contrastive,mertes2022alterfactual}.

Inputs of a classifier  are called instances, i.e., valuations of all its variables/features/factors, and outputs are called classifications/predictions/decisions.\footnote{We use them as synonyms through the paper.
Another set of synonyms is perturbation/intervention/manipulation. 
The variety of terminology is unfortunate.}
When both input and output of the classifier are binary, it is just a Boolean function $f: \{0, 1\}^n \longrightarrow \{0, 1\}$, and furthermore can be expressed by a propositional formula.
This isomorphism between Boolean functions and logic has been known ever since the seminal work of Boole.
Recently there has been a renewed interest in Boolean functions in the
area
of
logic-based approaches to XAI
\cite{shih2018formal,ignatiev2019abduction,DBLP:conf/ecai/DarwicheH20,ignatiev2020contrastive,shi2020tractable,audemard2021computational,amgoud2022axiomatic}.
They concentrate on \emph{local} explanations, i.e., on explaining why an actual instance is classified in a certain way.

We argue that it is natural and fruitful to represent  binary (input) classifiers and their explanations with the help of a modal language.
To that end let us first explain the conceptual foundation of explanation in the context of classifiers, which is largely ignored in the recent literature.


What is an explanation? Despite subtle philosophical debates,\footnote{E.g., whether all explanations are causal, whether metaphysical  explanation/grounding should be distinguished from causal explanation.} by explanation people usually mean \emph{causal explanation}, an answer to a ``why'' question in terms of ``because''.
Then what is a causal explanation? 
Ever since the seminal deductive-nomological (D-N) model \cite{hempel1948studies},
one can view it as a logical relation between an explanandum (the proposition being explained) and an explanans (the proposition explaining), which is itself expressible by a logical formula.
According to the D-N model, a causal explanation of a certain fact should include  a reference to the \emph{laws} that are used for deducing it 
from a set of premises. 

 More recently Woodward \& Hitchcock \cite[p. 2, p. 17]{woodward2003explanatory}
 (see also \cite[Ch. 5 and 6]{WoodwardBook2003}) proposed that causal  explanations make reference to generalizations,
  or descriptions of dependency relations, which 
 specify relationships between the explanans and explanandum variables.
 No need of being laws,  such generalizations
 exhibit how the explanandum variable is counterfactually dependent on the explanans variables 
 by 
 relating changes in the value of the latter
 to changes in the value of the former.\footnote{Using 
 the notion of counterfactual
 dependence for reasoning about 
  natural laws and causality traces back to \cite{goodman1955fact,lewis1979counterfactual,LewisCausation}. The focus nowadays, e.g. \cite{woodward2000explanation,Halpern2016Actual}, is 
  on the use of counterfactuals 
  for modeling the notion 
  of \emph{actual} cause in order to test (rather than define) causality. } 
According to 
 Woodward \& Hitchcock, a generalization
 used
 in a causal explanation is  \emph{invariant under   intervention}
 insofar as it remains stable after 
 changing the actual value of the variables
 cited in the explanation.\footnote{ Woodward \& Hitchcock
 also discuss  invariance 
 with respect to the background conditions not figuring in the relationship between explanans and explanandum. Nonetheless,
 they consider  
 this type of invariance less  central to causal explanation. }

We claim that existing notions of explanation leveraged in the XAI domain rest upon the idea of invariance under intervention. 
However, while Woodward \& Hitchcock
apply it to the notion of generalization,
in the XAI domain it usually concerns 
the result of the classifier's  decision to be explained. 
Another minor  difference 
with Woodward \& Hitchcock is terminological:
when explaining the  decision
of a binary classifier system, the term 
 `perturbation' is commonly used instead of `intervention'. 
But they both mean switching some features' values from the current ones to other ones.
Let us outline it by introducing informally our running example.

\begin{example}[Applicant Alice, informal]
\label{ex: AliceDebut informal}
Alice applies for a loan.
She
is not male, she is employed, and she rents an apartment in the city center, which we note  $\neg male \wedge employed \wedge \neg owner \wedge center$.
The classifier $f$ only accepts the application if the applicant is employed, and either is a male or owns a property.
Hence, Alice's application  is rejected.
\end{example}
In the XAI literature $\neg male \wedge \neg owner$ is called an abductive explanation (AXp) \cite{ignatiev2019abduction} or sufficient reason \cite{DBLP:conf/ecai/DarwicheH20} of the actual
decision of rejecting Alice's application, because perturbing the values of the other features (`employment' and `address' in this setting),
while keeping the values 
of `gender' or `ownership'
fixed, 
will not change the decision. 
More generally, 
for a term (a conjunction of literals)
to be an abductive  explanation    of
the classifier's actual  decision, 
the classifier's decision should be invariant under perturbation 
on the variables not appearing in the term.\footnote{
AXp  satisfies an  additional restriction
of minimality that will be elucidated at a later stage:
an AXp is a `minimal' term 
for which the classifier's actual decision 
is invariant under perturbation.}

On the contrary, $\neg male$ is called a contrastive explanation (CXp) \cite{ignatiev2020contrastive},\footnote{We prefer the notation AXp used by Ignatiev et al.$\cite{ignatiev2019abduction,ignatiev2020contrastive}$ for its connection with CXp.} because perturbing nothing but `gender' will change the decision from rejecting the application
to accepting it.
Therefore, the ``duality'' between two notions rests on the fact that AXp answers a \emph{why}-question by indicating that the classification would stay unchanged under intervention on variables other than `gender' and `ownership',
whereas CXp answers a \emph{why not}-question by indicating that the classification would change under intervention on `gender'.
More generally, 
for a term (a conjunction of literals)
to be a contrastive  explanation    of
the classifier's actual  decision, 
the classifier's decision should be variant under perturbation 
on \emph{all} variables appearing in the term,
where `variant' is assumed to be synonym of
`non-invariant'.


As Woodward \cite[p. 225, footnote 5]{woodward2000explanation} clarifies:
\begin{quote}
    [I]nvariance is a \emph{modal} notion -- it has to do with whether a relationship would remain stable under various hypothetical changes.
\end{quote}
Therefore, following Woodward, 
the most natural way of modeling invariance
is 
by means of a modal language whereby the notions of necessity and possibility
can be represented. 
This is the approach we take  in this work. 

In particular,
in order to model explanations
in classifier systems,
we use a modal language  with a \emph{ceteris paribus} (other things being equal) flavor. 
Indeed, the notion of invariance under intervention
we consider  presupposes that one intervenes
on specific input features of the classifier, while keeping the values
of the other input features unchanged
(i.e., the values of the other input features being equal). 
So, for Alice's example we expect two modal formulas saying:
\begin{enumerate}[a)]

    \item `gender' and `ownership' keeping their actual values, changing other features' values \emph{necessarily} does not affect the actual decision of
    rejecting Alice's application; 
    
        \item other features  keeping their actual values, changing the
    value of `gender'
  \emph{necessarily}  modifies  the classifier's decision  of
    rejecting Alice's application. 
\end{enumerate}

  Specifically, we  will extend the \emph{ceteris paribus}
modal logic
introduced in \cite{LoriniCETERISPARIBUS} by 
a finite set of atoms representing
possible decisions/classifications  of a classifier 
and axioms regarding them.
The resulting logic 
 is called $\bcl$
 which stands for
Binary input Classifier Logic,
since the input variables of a classifier are assumed to be binary.
One may roughly thinks of 
its models as  S5 models supplemented with a classification function
which allows us to fully represent a classifier system. 
Each state in the model corresponds to a possible input instance
of the classifier. Moreover, the classification function induces a partition of the set of instances, where each part 
corresponds
to a set of input instances which  are classified equally by the classifier. We call these models \emph{classifier models}. 
$\bcl$ and its extensions 
open up new vistas
including 
(i)  defining counterfactual conditionals 
and studying their relationship with the notions
of abductive and contrastive explanation,
(ii)  modeling classifier dynamics 
through the use 
of formal semantics
for logics of communication and change
\cite{DBLP:journals/iandc/BenthemEK06,kooietalDEL2007},
and 
(iii) 
viewing a classifier as an agent 
and representing 
its  uncertainty
about the actual instance to be classified through the use of epistemic logic
\cite{Fagin1995}.

Before concluding this introduction, 
it is worth noting that a classifier
system
is a  simple form of causal system
whose only dependency relations
are between the input variables and the single output variable. 
Unlike Bayesian networks or artificial neural networks,
a classifier system does not include `intermediate'
endogenous variables that, at the same time, depend
on the input variables and causally influence the output variable(s). 
Therefore,  many distinctions and disputations 
addressed in the theory of causality and causal explanation do
not emerge in our work. For example, the vital distinction between correlation and causality \cite{pearl2009causality}, the criticism of \emph{ceteris paribus} as natural law \cite{woodward2000explanation}, and whether a causal explanation requires 
providing information about a causal history 
or causal chain of events
\cite{LewisCausalExplanation}.  All these subtleties only show up when the causal structure is complex, and hence collapse in a classifier system, which has only two layers (input-output). 

\paragraph{Outline}

The paper is structured as follows.
In Section 2 we introduce our modal language  as
well as its formal semantics
using the notion of classifier model. 
In Section 3 two proof systems are given,  $\bcl$ and `weak' $\bcl$ ($\wbcl$).
We show they are sound and complete relative
to
the classifier system
semantics with, respectively,
finite-input and
infinite-input variables. 
Section 4 presents a family of counterfactual conditional operators and elucidates their relevance for understanding the behavior of a classifier system.
Section 5 is devoted to classifier explanation.
We extend the existing notions of explanation for Boolean classifiers to binary input classifiers. The notions include AXp, CXp and bias in the field of XAI.
We will see that in the binary input classifier setting their behaviors are subtler.
Besides, their connection with counterfactual is studied.
Finally, in Section 6 we present two extensions of our language: (i) a dynamic extension by the notion of assignment enabling classifier change and (ii) an epistemic extension in which the classifier's uncertainty about the actual input can be represented.
Further possible researches are discussed in the conclusion.
Main results are either proven in the appendix or pointed out as corollaries.


Compared  to
our conference paper presented at CLAR 2021
\cite{LiuLorini2021BCL}, we
do not restrict anymore  to the language with finite variables. Thus two proof systems instead of one have to be presented, because
in the infinite-variable
setting 
the 
``functionality'' property 
of classifiers 
cannot be syntactically
expressed in a finitary way. 
The assumption of complete domain is dropped partially for the same technical reason.
As a result, we are able to represent partial classifiers which were
not expressible
in the framework presented
in our CLAR 2021 paper. 
Completeness and complexity results have been refined and improved.
Other parts also changed according to possibly infinite variables and incomplete domain.

\section{A Language for Binary Classifiers}
In this section we introduce
a language for modeling binary (input) classifiers
and its semantics.
The language has a 
\emph{ceteris paribus} nature 
that comes from
the \emph{ceteris paribus} operators of the form $[X]$ it contains.
They
were first introduced
in \cite{LoriniCETERISPARIBUS}.\footnote{
More recently, similar operators
have been used in the context
of the logic of  functional dependence by
Baltag \& van Benthem \cite{baltag2021simple}.}


\subsection{Basic Language and Classifier Model}\label{subsec:ClassifierModel}

Let $\atm_0$
be a  countable set of atomic propositions
with elements  noted $p, q , \ldots$ which
are used to represent the value
taken by an input variable
(or feature). 
When referring to input variables/features we sometimes
use the notation 
`$p$' to distinguish it from
the symbol $p$
for atomic proposition.
In this sense, the atomic proposition  $p$ should be read
``the Boolean  input variable `$p$'
takes value $1$'', 
while its negation $\neg p$
should be read 
``the Boolean  input variable `$p$'
takes value $0$''.

We introduce a finite set $\val$ to denote the \emph{output values} (classifications, decisions) of the classifier. Elements of $\val$ are also
called  \emph{classes} in the jargon of classifiers. For this reason, we note them $c, c', \ldots$
For any $c \in \val$, we call $\takevalue{c}$ a decision atom, to be read as
``the actual 
decision (or output) takes value  $c$'',
and have $\dec = \{\takevalue{c} : c \in \val\}$.
Finally, let $\atm = \atm_0 \cup \dec$
be the set of atomic formulas.
Notice symbols $c$ 
and $p$ have different statuses: 
$p$ 
is an atomic proposition representing an atomic fact,
while $c$
is not. 
This 
explains why  $c$ (an output value) 
and
$\takevalue{c}$
(an atomic formula
representing the fact that the  actual output has a certain value)
are distinguished.

The
modal
language  $\mathcal{L} (\mathit{Atm})$
is hence defined by the following grammar:
\begin{center}\begin{tabular}{lcl}
 $\varphi$  & $::=$ & $ p \mid \takevalue{c} \mid
  \neg\varphi \mid \varphi\wedge\varphi \mid [X]\varphi,
$
\end{tabular}\end{center}
where $p$ ranges over $\atm_0$, $c$ ranges over $\val$,
and $X$
is a finite subset of $\atm_0$
which we note  $X \finsubseteq \atm_0$.

The set of atomic formulas 
occurring in a formula $\varphi$
is noted
$\mathit{Atm}(\varphi)$.

The formula $ [X]\varphi$
has to be read  ``$\varphi$ is necessary
all features  in $X$ being equal''
or
``$\varphi$ is necessary
regardless
of the truth or falsity
of the atoms in  $ \mathit{Atm}_0 \setminus X$''.
Operator
$\langle X \rangle$
is the dual
of 
$[ X ]$
and is defined as
usual:
$\langle X \rangle \varphi =_{\mathit{def}} \neg [ X ] \neg \varphi$.

The language 
 $\mathcal{L} (\mathit{Atm})$
is interpreted relative 
to classifier models  whose class is defined as follows. 
\begin{definition}[Classifier model]\label{Def:ModelAltern}
	A classifier  model (CM)
	is 
	a tuple $C= (S, f  )$
	where:
	\begin{itemize}
	\item  $S \subseteq 2^{\atm_0}$
		is a set of states or input
  instances,
		and
		\item 
		$f: S \longrightarrow \mathit{Val}$
	is a decision (or classification) function.
	\end{itemize}

The class of classifier
models is noted
$\mathbf{CM}$.

\end{definition}
A pointed classifier 
	model  is a pair $(C,s)$
	with $C= (S, f )$
	a classifier model   
	and $s \in S$.
Formulas in
$\mathcal{L} (\mathit{Atm})$
are interpreted
relative to a pointed  classifier model, as follows.

\begin{definition}[Satisfaction relation]\label{truthcondCM}
	Let $(C,s)$
	be a pointed  classifier model with
	$C= (S,f)$ and $s \in S$. Then:
	\begin{eqnarray*}
		(C, s) \models p & \Longleftrightarrow & p \in s , \\
		(C, s) \models \takevalue{c}
	& \Longleftrightarrow & f(s)=c,\\
				(C, s) \models \neg \varphi & \Longleftrightarrow &
				(C, s) \not \models \varphi ,\\
								(C, s) \models 
								 \varphi \wedge \psi & \Longleftrightarrow &
				(C, s) \models \varphi
				\text{ and } (C, s) \models \psi ,\\
		(C,s) \models [X] \varphi
		& \Longleftrightarrow & 
		\forall s' \in S, \text{ if }
	(	s \cap X)= (s'  \cap X) 
		\text{then } (C,s') \models \varphi.
	\end{eqnarray*}
\end{definition}

We can think of a pointed model $(C, s)$ as a pair $(s, c)$ of $f$ with $f(s) = c$. Thus,
$c$
is the output of the input instance $s$
according to $f$.
The condition $(	s \cap X)= (s'  \cap X)$, which induces an equivalence relation modulo $X$,
indeed stipulates
that $s$ and $s'$
are indistinguishable regarding the atoms (the features) in $X$.
The formula $[X] \varphi$
is true
at a state $s$
if $\varphi$
is true at all states that are 
modulo-$X$ equivalent to state $s$.
It has the \emph{selectis paribus}
(SP)
(selected
things being equal)
 interpretation ``features in $X$ being equal, necessarily $\phi$ holds (under possible perturbation on
 the other features)''. 
 When $\atm_0$ is finite, $[\atm_0 \setminus X]\phi$ has the standard \emph{ceteris paribus} (CP) interpretation 
``features other than $X$ being equal, necessarily $\phi$ holds (under possible perturbation of
 the  features in $X$)''.\footnote{We thank Giovanni Sartor for drawing the distinction between CP and SP.}
 When $X = \emptyset$, $\allins$ is
the S5 universal modality since every state is modulo-$\emptyset$ equivalent to all states.


A formula $\varphi$
of
$\mathcal{L} (\mathit{Atm})$
is said to 
be satisfiable
relative
to the class 
$\mathbf{CM}$
if there exists
a pointed classifier model 
$(C,s)$
with $C \in \mathbf{CM} $
such that $(C,s) \models \varphi$.
It is said to be valid
relative to 
$\mathbf{CM}$,
noted $\models_{\mathbf{CM}} \varphi$,
if
$\neg \varphi$
is not satisfiable relative
to $\mathbf{CM}$.
Moreover, we say that that $\varphi$
is valid in the classifier 
model
	$C= (S,f )$, noted
	$C \models \phi$,
	if $(C,s) \models \varphi$
	for every $s \in S$.

It is worth noting that 
every 
modality $[X] $ can be defined  by means of the universal 
modality $\allins$.
To show  this, let us introduce the following abbreviation for every $Y \subseteq X \finsubseteq \atm_0$:
\begin{align*}
\conj{Y }{X}=_{\mathit{def}} \bigwedge_{ p \in Y} p \wedge
\bigwedge_{ p \in X \setminus Y  } \neg p.
\end{align*}
$\conj{Y}{X}$ can be seen as the syntactic expression of a valuation on $X$, and therefore represents a set of states
in a classifier model
satisfying the valuation. We have the following validity
for the class $\mathbf{CM}$:
\begin{align*}
\models_{\mathbf{CM}}  [X]\varphi 
 \leftrightarrow \Big( 
 \bigwedge_{Y \subseteq X} \big( \conj{Y }{X} \to
 \allins ( \conj{Y }{X} \to \phi)\big) \Big).
\end{align*}
It means that $[X]\phi$ is true at state $s$, if and only if, for whatever $Y \subseteq X$, if $s \cap X = Y$ then for any state $s'$ such that $s' \cap X = Y$, $\phi$ is true at $s'$.

Let us close this section  by formally introducing our running example.

\begin{example}[Applicant Alice, formal]\label{ex:AliceDebut}
    Let $\mathit{Atm} = \{male, center, employed, owner\}$ $\cup$ $\{\takevalue{1}, \takevalue{0} \}$, where $1$ and $0$ stand for $accepted$ and $rejected$ respectively.
    Suppose $C = (S, f)$ is a CM
    such that $S = 2^{\atm_0}$
    and 
    \begin{align*}
          C \models \big(\takevalue{1} \leftrightarrow ((male \wedge employed) \vee (employed \wedge owner)) \big).
    \end{align*}
    Consider the 
    state $s = \{center, employed\}$.  Then, $s$ stands for the instance Alice and $f$ for the classifier in Example \ref{ex: AliceDebut informal} such that  $f(s) = 0$.
\end{example}

Now Alice is asking for explanations of the decision/classification, e.g.,
1) which of her features (necessarily) lead to the current decision,
2) changing which features would make a difference,
3) perhaps most importantly, whether the decision for her is biased.
In Section 
\ref{sec:Xps} we will show how to use the language 
$\mathcal{L} (\mathit{Atm})$
and its semantics 
to answer these questions. 

\subsection{Discussion}\label{sec:discussion}
In this subsection  we discuss 
in more detail some subtleties of classifier models in relation
with the modal language $\mathcal{L}(\atm)$
which is interpreted over them. 

\paragraph{$X$-Completeness  }

In the definition of classifier model
(Definition \ref{Def:ModelAltern})
given above, we stipulated that
the set of states
$S$
does not necessarily include
all possible input instances
of a classifier. 
More generally, according to our definition,
a classifier
model
could be incomplete with respect
to a  set of atoms $X $ from $\atm_0$,
that is, there could be a truth assignment
for the atoms
in $X$
which is not represented in the model. 
Incompleteness of
a classifier model is justified by the fact
that in certain domains of application hard constraints
exist which prevent for some
input instance to occur.  
For example,  a hard constraint may impose   that 
a male cannot be pregnant (i.e., all states in which 
atoms $\mathit{male}$
and $\mathit{pregnant}$
are true should be excluded from the model). 

However, it is interesting to see
how completeness
of a classifier 
with respect
to a finite set of features 
can be represented in our semantics. 
This is what the following definition specifies. 

\begin{definition}[$X$-completeness]
    Let $C = (S, f)$ be a classifier model and $X \finsubseteq \atm_0$.
    Then, $C$ is said to be $X$-complete, if $\forall X' \subseteq X, \exists s \in S \text{ such that }s \cap X = X'$.
\end{definition}
In plain words, the definition means that any
truth assignment
for the atoms
in $X$
is represented by some state of the model.
As the following proposition indicates,
 the class of $X$-complete CMs can be syntactically
 represented.
 The proof is straightforward and omitted.
\begin{proposition}\label{prop: X-compl definable}
   Let $C = (S, f)$ be a CM and $X \finsubseteq \atm_0$. $C$ is $X$-complete if and only if $\forall s \in S$, we have
        $(C, s) \models \mathsf{Comp}(X)$,
    with 
        \begin{align*}
  \mathsf{Comp}(X) =_{\textit{def}}  
  \bigwedge_{X' \subseteq X} \someins \conj{X'}{X}.
    \end{align*}
\end{proposition}


\paragraph{$X$-Definiteness}

In certain situations, there might be a portion of the feature
space which is  irrelevant for the classifier's
decision. 
For example,
in the Alice's example
the fact of renting
an apartment in the city
center (the feature $center$)
plays no role in the classification. 
In this case, 
we say that the classifier is definite with respect
to the subset of features 
$ \{male,  employed, owner\}$.

More generally, a classifier is said to
be definite
with respect to a set of features $X$
if its decision is only determined by the variables
in $X$, that is to say, the variables in the complementary set
$\mathit{Atm}_0\setminus X$
play no role in the classifier's decision. 
In other words,
the classifier is said to be
$X$-definite if 
its decision  is independent of the variables in 
$\mathit{Atm}_0\setminus X$. 

The following definition introduces the concept
of $X$-definiteness
formally.
\begin{definition}[$X$-definiteness]
    Let $C = (S, f)$ be a classifier model and $X \finsubseteq \atm_0 $. Then, $C$ is said to be $X$-definite, if $\forall s, s' \in S$, if $s \cap X = s' \cap X$ then $f(s) = f(s')$.
\end{definition}

$X$-definiteness
is tightly related to the notion of
dependence studied in (propositional) dependence logic \cite{yang2016propositional}. 
The latter focuses on  so-called dependence atoms
of the form 
$=\hspace{-0.1cm}(X, q)$ 
where $q$
is a propositional variable
and $X$
is a finite set of propositional variables.
The latter expresses the fact that
the truth value of the 
propositional variable
$q$
only depends on the truth values of the propositional variables 
in 
$X$.
It turns out that dependence atoms
can be expressed
in our \emph{ceteris paribus}
modal language 
 $\mathcal{L} (\mathit{Atm})$
 as abbreviations:
\begin{align*}
    =\hspace{-0.1cm}(X, q) =_{\textit{def}} \allins \big((q \to [X] q) \wedge (\neg q \to [X] \neg q) \big).
\end{align*}

Interestingly, 
the notion of 
$X$-definiteness  is expressible in our modal language
by means of the dependence atoms.
This is what the following proposition indicates.
\begin{proposition}\label{prop: X-def definable}
    Let $C = (S, f)$ be a CM and $X \finsubseteq \atm_0$. $C$ is $X$-definite if and only if $\forall s \in S, (C, s) \models \mathsf{Defin}(X)$ with 
            \begin{align*}
 \mathsf{Defin}(X) =_{\textit{def}}  
    \bigwedge_{c \in \val}  
 =\hspace{-0.1cm}\big(X, \takevalue{c} \big ). 
    \end{align*}
\end{proposition}

We conclude this section
by mentioning some remarkable properties
of  $X$-definiteness. 
The first fact to be noticed 
is that 
$X$-definiteness is upward closed.
\begin{fact}
  For every $C \in \mathbf{CM}$ and  $X \subseteq Y\finsubseteq \atm_0$,
 if 
$C$ is $X$-definite
 then $C$ is $Y$-definite too.   
\end{fact} Secondly,  
    $X$-definiteness
    for some
    $X \finsubseteq \atm_0$
    is guaranteed
    in the finite-variable case.
    \begin{fact}
      For every 
$C \in \mathbf{CM}$,
   if  $\atm_0$ is finite then
   $C$ is $\atm_0$-definite.
    \end{fact}
This  does not hold in the infinite case.
    \begin{fact} \label{fact: infinite indefinite}
      If  $\atm_0$ is countably infinite
   and $|\val| > 1$
   then there exists  $C \in  \mathbf{CM}$ such that,
   for all $X \finsubseteq \atm_0$, 
$C$ is not $X$-definite.
        \end{fact}
The previous fact is witnessed by
any  CM $C=(S,f)$
such that
\begin{itemize}
    \item $S=2^{\atm_0}$,
    \item $f( \atm_0)=1 $,
    \item $\forall s \in S$,
    if $| \atm_0 \triangle s |= 1$
    then $f(s)=0$,
\end{itemize}
where $\dec=\{0,1 \}$
and
$\triangle$ denotes symmetric difference, viz., $s \triangle s' = (s \setminus s') \cup (s' \setminus s)$. It is easy to show that
a CM 
 so defined
 is not $X$-definite
 for any  $X \finsubseteq \atm_0$.

\section{Axiomatization and Complexity}

In this section, 
we provide axiomatics for our logical
setting. We distinguish the finite-variable from
the infinite-variable case.
We moreover prove complexity
results for satisfiability checking for both cases. 
But before, 
we will first introduce 
an alternative Kripke semantics
for 
the interpretation of the language 
$\mathcal{L}(\atm) $.
It will allow us to 
use the standard canonical model technique for proving completeness. Indeed, this technique
cannot be directly applied to CMs in the infinite-variable case
since our modal language 
is not expressive enough to capture the ``functionality''
property of CMs when 
$\atm_0$ is infinite. 
We think it would be possible  to
apply the canonical model argument 
directly  
to CMs in the finite-variable case. But we leave this 
to future work. 

\subsection{Alternative Kripke Semantics}

In our alternative semantics
the concept of classifier model
is replaced by the following concept
of decision model. 
It is a multi-relational  Kripke structure
with one accessibility relation per finite set of atoms
\emph{plus}
a number of constraints
over the accessibility  relations and the valuation
function 
for the atomic propositions. 

\begin{definition}[Decision model]\label{def:DM}
    A decision model (DM)  is  a tuple
    $M = \big(W, (\equiv_X )_{X \finsubseteq \atm_0}, V \big)$
    such that $W$ is a set of possible worlds, $V: W \longrightarrow 2^\atm$
    is a valuation
    function for atomic formulas, 
    and $\forall w, v \in W, c, c' \in \mathit{Val}$ the following constraints are satisfied:
    \begin{description}
     	 \item[($\mathbf{C1}$)] $w \equiv_X v$ iff $V_X(w) = V_X(v)$,
     	 \item[($\mathbf{C2}$)] $V_{\dec}(w) \neq \emptyset$,
     	 \item[($\mathbf{C3}$)] if $\takevalue{c}, \takevalue{c'} \in V(w)$ then $c = c'$,
     	 \item[($\mathbf{C4}$)] if $V_{\atm_0} (w) = V_{\atm_0} (v)$ then $V_{\dec}(w) = V_{\dec}(v)$;
    \end{description}
    where $V_X(w)$ abbreviates  $V(w) \cap X$.
    The class of DMs is noted $\mathbf{DM}$.
\end{definition}
A DM $\dm$ is called finite if $W$ is finite. The class of finite-DM is noted $\textbf{finite-DM}$.

The interpretation of formulas
in 
 $ \mathcal{L} (\atm)$
relative
to a pointed DM 
goes as follows.

\begin{definition}[Satisfaction relation]\label{truthcond2}
	Let     $\dm$
    be a DM and let $w \in W$.
    Then, 
	\begin{eqnarray*}
		(M, w) \models p & \Longleftrightarrow & p \in 
		V(w), \\
	    (M, w) \models \takevalue{c} & \Longleftrightarrow & \takevalue{c} \in V(w), \\
		(M, w) \models \neg \varphi & \Longleftrightarrow & (M, w) \not \models  \varphi ,\\
		(M, w) \models \varphi \wedge \psi & \Longleftrightarrow & (M, w) \models \varphi   \text{ and }(M, w) \models \psi, \\
		(M,w) \models [X] \varphi
		& \Longleftrightarrow & 
		\forall v \in W,
		\text{ if }
		w \equiv_X v 
		\text{ then } v \models \varphi.
	\end{eqnarray*}
\end{definition}

    Validity and satisfiability
of formulas
in $\mathcal{L} (\atm) $
relative to   class  $\mathbf{DM}$ (resp. $\textbf{finite-DM}$) is defined in the usual way. 

The following theorem appears obvious, since it only has to do with the matter whether the decision function (classifier) $f$ is given as a constituent of the model or induced from the model.
Notice that it holds regardless of $\atm_0$ being finite or countably infinite.

\begin{theorem}\label{theo: CM and DM}
Let $\varphi \in \mathcal{L} (\atm) $.
    Then, $\varphi$
    is satisfiable relative to the
    class 
       $\mathbf{CM}$
       if and only if it is satisfiable
       relative
       to the class
          $\mathbf{DM}$.
\end{theorem}

\subsection{Axiomatization: Finite-Variable Case}\label{sec:axiomcompl}

In this section
we provide a
sound and complete axiomatics for
the language $\mathcal{L} (\mathit{Atm})$
relative
to the formal semantics
defined above
under the assumption that the set
of atomic propositions 
$\mathit{Atm}_0$ is finite.

\begin{definition}[Logic $\bcl$]\label{axiomatics}
We define  $\bcl$
(Binary Classifier Logic)
to be the extension of classical
propositional logic given by the following
 axioms and rules of inference:
\begin{align}
& \big( [\emptyset] \varphi
\wedge [\emptyset] (\varphi \rightarrow \psi) \big)
\rightarrow [\emptyset] \psi
 \tagLabel{K$_{[\emptyset]}$}{ax:Kbox}\\
 & [\emptyset] \varphi
\rightarrow  \varphi
 \tagLabel{T$_{[\emptyset]}$}{ax:Tbox}\\
  & [\emptyset] \varphi
\rightarrow  [\emptyset][\emptyset] \varphi
 \tagLabel{4$_{[\emptyset]}$}{ax:4box}\\
   & \varphi
\rightarrow  [\emptyset] \langle \emptyset \rangle \varphi
 \tagLabel{B$_{[\emptyset]}$}{ax:Bbox}\\
   &  [X ]\varphi \leftrightarrow 
\bigwedge_{Y \subseteq X } \big(  \conj{Y }{X}
\rightarrow  [\emptyset ]( \conj{Y }{X} \rightarrow \varphi)  \big)
 \tagLabel{Red$_{[\emptyset]}$}{ax:Redbox}\\
 &\bigvee_{c \in \mathit{Val}}\takevalue{c}
 \tagLabel{AtLeast}{ax:Leastx}\\
 & \takevalue{c} \to \neg \takevalue{c'} \text{ if }c \neq c'
 \tagLabel{AtMost}{ax:Mosttx}\\
& \bigwedge_{ Y \finsubseteq \atm_0}\Big(
\big(\conj{Y }{\atm_0} \wedge \takevalue{c} \big) \rightarrow 
   [\emptyset]\big(\conj{Y }{\atm_0} \rightarrow
\takevalue{c} \big) \Big)
 \tagLabel{Funct}{ax:functX}\\
& \frac{\phi \to \psi,  \hspace{0.25cm} \phi}{\psi}
\tagLabel{MP}{rule:MP}\\
&  \frac{\varphi }{[\emptyset] \varphi }
 \tagLabel{Nec$_{[\emptyset]}$}{rule:Necbox}
\end{align}
%

\end{definition}

It can be seen that $[\emptyset]$ is an S5 style modal operator, $\ref{ax:Redbox}$ reduces any $[X]$ to $[\emptyset]$. 
$\ref{ax:Leastx}, \ref{ax:Mosttx}, \ref{ax:functX}$ represent the decision function syntactically
and that
every expression $\conj{Y}{\atm_0}$ maps to some unique $\takevalue{c}$.

A decision model can contain two copies of the same input instance,
while a classifier model cannot.
Thus, Theorem \ref{theo: CM and DM}  above  shows that
our modal language is not powerful
enough to capture this difference between CMs and DMs.
Axiom $\ref{ax:functX}$
intervenes in the finite-variable case
to 
guarantee that  two copies
of the same input instance (that may exist in a DM)
have the same output value.
The expression 
$\conj{Y}{\atm_0}$  used in the axiom
is an instance of the abbreviation
we defined in Section 
\ref{subsec:ClassifierModel}.
It represents a specific  input instance.
Notice that this abbreviation is only legal when $\atm_0$ is finite. Otherwise it
would be the abbreviation of an infinite conjunction which is not allowed,
since our modal language is finitary.




The proof of the following theorem is entirely standard and based on a canonical model argument.

\begin{theorem}\label{theo:comp0}
Let 
$\mathit{Atm}_0
$ be finite. Then,    the  logic $\bcl$
is sound and complete relative
to the class $\mathbf{DM}$.
\end{theorem}

The main result of this subsection is now a corollary
of Theorems \ref{theo: CM and DM}
and \ref{theo:comp0}.

\begin{corollary}\label{theo:comp1}
Let 
$\mathit{Atm}_0
$ be finite. 
Then, the  logic 
$\bcl$
is sound and complete relative
to the class $\mathbf{CM}$.
\end{corollary}

\subsection{Axiomatization: Infinite-Variable Case }\label{sec:infinite}


In Section \ref{sec:axiomcompl},
we have assumed
that the set of
atomic propositions 
$\mathit{Atm}_0
$
representing input features
is finite. 
In this section,
we  drop
this 
assumption
and prove completeness
of the resulting logic.

An essential feature
of the logic $\bcl$
is the ``functionality'' Axiom 
\ref{ax:functX}.
Such an axiom cannot be represented
in a finitary
way when assuming that the set 
$\mathit{Atm}_0
$ is countably infinite. 
For this reason, it has to be dismissed
and the logic weakened. 
\begin{definition}[Logic $\wbcl$]\label{axiomatics2}
The logic  $\wbcl$
(Weak $\bcl$)
is defined by all principles
of logic 
$\bcl$
given in
Definition \ref{axiomatics}
except 
Axiom \ref{ax:functX}. 
\end{definition}



In order to obtain the completeness of $\wbcl$ relative to the class $\mathbf{CM}$, besides decision models (DMs),  we need additionally quasi-decision models (QDMs).

\begin{definition}[Quasi-DM]\label{def:QDM}
A quasi-DM is 
a tuple
   $M = \big(W, (\equiv_X )_{X \finsubseteq \atm_0}, V \big)$
    where
    $W$,
    $ (\equiv_X )_{X \finsubseteq \atm_0}$
    and $V$
    are defined as in Definition
    \ref{def:DM}
    and which satisfies all constraints
    of Definition 
       \ref{def:DM} except $\mathbf{C4}$.
       The class of quasi-DMs
is noted $\textbf{QDM}$.
\end{definition}

A  quasi-DM $\dm$
    is said to be finite
    if $W$ is finite.
         The class of finite quasi-DMs
is noted $\textbf{finite-QDM}$.
    
    Semantic interpretation of formulas
    in 
     $ \mathcal{L} (\atm)$
     relative to quasi-DMs
     is analogous to semantic
     interpretation relative
     to DMs given in Definition
     \ref{truthcond2}. Moreover, 
    validity and satisfiability
of formulas
in $\mathcal{L} (\atm) $
relative to   class  $\textbf{QDM}$ (resp. $\textbf{finite-QDM}$) is again defined in the usual way.


We are going to show the equivalence between $\textbf{QDM}$ and $\textbf{CM}$ step by step.
The following theorem is proven by filtration.

\begin{theorem}\label{theo: QDM and finite-QDM}
    Let $\atm_0$ be countably infinite and $\phi \in \mathcal{L}(\atm)$. Then, $\phi$ is satisfiable relative to the class $\textbf{QDM}$ if and only if $\phi$ is satisfiable relative to the class
    $\textbf{finite-QDM}$.
\end{theorem}

Then,  let us establish the  crucial fact that,
in the infinite-variable case, the language $\mathcal{L}(\atm)$ cannot distinguish  finite-DMs from finite-QDMs.
We are going to prove that any
formula $\phi$ satisfiable in a finite-QDM $M$ is also satisfiable in some finite-DM $M'$.
Since the only condition to worry is $\mathbf{C4}$, we just need to transform the valuation function of $M$ to
guarantee that $\mathbf{C4}$  holds  while still satisfying $\phi$.

\begin{theorem}\label{theo: finite-QDM and finite-DM}
    Let $\atm_0$ be countably infinite and $\phi \in \mathcal{L}(\atm)$.
    Then, $\phi$ is satisfiable relative to the class $\textbf{finite-QDM}$ if and only if $\phi$ is satisfiable relative to the class $\textbf{finite-DM}$.
\end{theorem}

Recall Theorem \ref{theo: CM and DM} shows that $\mathcal{L}(\atm)$ can not distinguish between CMs and DMs regardless of
$\mathit{Atm}_0 $
being finite or infinite.
Thus, we obtain the desired equivalence between model classes $\mathbf{QDM}$ and $\mathbf{CM}$
in the infinite-variable case,
as a corollary
of Theorems  \ref{theo: CM and DM},
\ref{theo: QDM and finite-QDM}
   and
  \ref{theo: finite-QDM and finite-DM}. 
This fact is highlighted by 
 Figure 
 \ref{fig:equivfigure}.
More generally,  Figure 
 \ref{fig:equivfigure}
shows  that
 when 
 $\mathit{Atm}_0
$
is countably infinite
 the five semantics for the modal language  $\mathcal{L} (\mathit{Atm})$  are all equivalent, since from every node in the graph we can reach all other nodes.

\begin{theorem}\label{theo:QDMequivCM}
Let $\atm_0$ be countably infinite and $\phi \in \mathcal{L}(\atm)$.
Then,  $\phi$ is satisfiable 
relative to the class
$\mathbf{QDM}$ if and only if $\phi$ is satisfiable 
relative to the class
$\mathbf{CM}$.
\end{theorem}

\smallskip

As a consequence, we are in position of proving that
the logic 
$\wbcl$
is also sound and complete
for the corresponding classifier
model semantics,
under the infinite-variable assumption.
The only missing block is the following completeness theorem. The proof is similar to the proof of Theorem \ref{theo:comp0} (with the only difference that the canonical QDM  does not need to satisfy $\mathbf{C4}$), and omitted.

\begin{theorem}\label{theo: QDM is complete}
Let $\mathit{Atm}_0
$
be countably infinite.
Then, 
the logic 
$\wbcl$
is sound and complete relative
to the class $\mathbf{QDM}$.
\end{theorem}

The main result of this subsection turns out to be a
direct corollary of Theorems \ref{theo:QDMequivCM}
and \ref{theo: QDM is complete}.

\begin{corollary}\label{theo:comp1 inf}
Let $\mathit{Atm}_0
$
be countably infinite.
Then, 
the logic 
$\wbcl$
is sound and complete relative
to the class $\mathbf{CM}$.
\end{corollary}

\begin{figure}[h]
\centering
\includegraphics[width=0.67\textwidth]{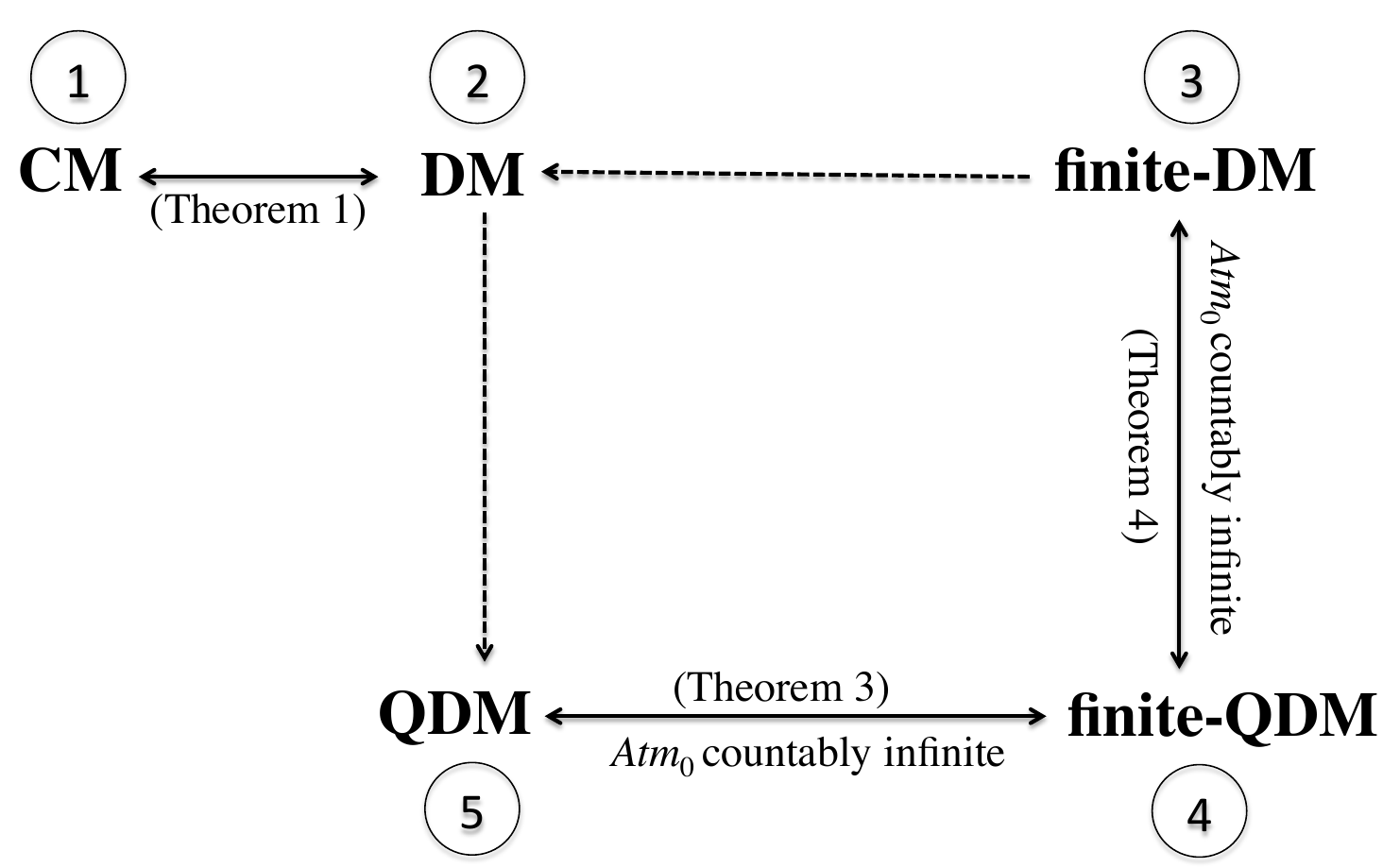}
\caption{
Relations between semantics  for the modal language  $\mathcal{L} (\mathit{Atm})$. An arrow means
that satisfiability relative to the first class of structures
 implies satisfiability relative to the second class of structures.
 Full arrows correspond to the results stated in
  Theorems 
 \ref{theo: CM and DM},
\ref{theo: QDM and finite-QDM}
   and
  \ref{theo: finite-QDM and finite-DM}.
 Dotted arrows
 denote relations that follow straightforwardly
 given the inclusion between classes of structures.
 The bidirectional arrows connecting node
 3 with node 4 and  node 4 with node 5
 only apply to the infinite-variable case.
 }
\label{fig:equivfigure}
\end{figure}

\subsection{Complexity Results}\label{complsection}

Let us now move 
from axiomatics
to complexity issues.
Our first result is 
about 
 complexity
of
checking 
satisfiability
for  formulas 
in 
$ \mathcal{L} (\mathit{Atm}) $
relative to the class
$\mathbf{CM}$ when $\atm_0$
is finite and fixed. 
It is in line with the satisfiability checking
problem
of the modal logic S5 which is known
to be polynomial in  the finite-variable
case 
\cite{DBLP:journals/ai/Halpern95}.
\begin{theorem}\label{theo:comp2}
Let $\mathit{Atm}_0
$
be finite and fixed. 
Then, 
checking satisfiability
of formulas in $\mathcal{L} (\mathit{Atm})$ 
relative to 
$\mathbf{CM}$ 
can be done in polynomial time. 
\end{theorem}

 As
 the following theorem indicates,
 the satisfiability checking 
 problem becomes intractable
 when dropping the finite-variable assumption.
\begin{theorem}\label{theo:comp3}
Let $\mathit{Atm}_0
$
be countably infinite.
Then, 
checking satisfiability
of formulas in $\mathcal{L} (\mathit{Atm})$ 
relative to 
$\mathbf{CM}$ 
is NEXPTIME-complete.
\end{theorem}

Let us consider 
the following fragment
$\mathcal{L}^{ \{ [\emptyset] \}} (\mathit{Atm})$
of the language 
$\mathcal{L} (\mathit{Atm})$
where only the universal
modality $[\emptyset]$
is allowed:
\begin{center}\begin{tabular}{lcl}
 $\varphi$  & $::=$ & $ p \mid 
 \takevalue{c}  \mid 
  \neg\varphi \mid \varphi\wedge\varphi \mid [\emptyset]\varphi.
$
\end{tabular}\end{center}

Clearly, satisfiability
checking 
for
formulas
in $\mathcal{L}^{ \{ [\emptyset] \}} (\mathit{Atm})$
remains polynomial 
when there are only  finitely many primitive propositions.
As the following theorem indicates,
complexity decreases
from
NEXPTIME to NP
when restricting to the fragment 
$\mathcal{L}^{ \{ [\emptyset] \}} (\mathit{Atm})$
under the infinite-variable
assumption. 
\begin{theorem}\label{theo:comp5}
Let $\mathit{Atm}_0
$
be countably infinite.
Then, 
checking satisfiability
of formulas in $\mathcal{L}^{ \{ [\emptyset] \}} (\mathit{Atm})$ 
relative to 
$\mathbf{CM}$ 
is NP-complete.
\end{theorem}

The complexity results
of this section are summarized in
Table \ref{fig:summary}. 
\begin{table}[h]
\centering 
\begin{tabular}{|c|c|c|}
\hline
    & \textbf{Fixed finite  variables}  & \textbf{Infinite  variables}  \\
   \hline
    \textbf{Fragment
     $\mathcal{L}^{ \{ [\emptyset] \}} (\mathit{Atm})$}
     & Polynomial  & NP-complete \\
   \hline
   \textbf{Full language
   $\mathcal{L} (\mathit{Atm})$}
   & Polynomial  & NEXPTIME-complete \\
   \hline
\end{tabular}
\caption{Summary of complexity results}
\label{fig:summary}
\end{table}

 \section{Counterfactual Conditional} \label{sec:counterfac}

 In this section
 we investigate a simple notion
 of 
 counterfactual conditional
 for binary classifiers,
 inspired from Lewis' notion \cite{LewisCounter}.
 In Section \ref{sec:Xps},
 we will 
 elucidate its connection with the notion
 of explanation.
 
 We start our analysis
 by defining the following notion
 of similarity 
 between states
 in a classifier model
 relative to a finite set of features $X$.

  \begin{definition}[Similarity between states]
          Let 
	$C= (S, f)$
	be a classifier model,
	$s ,s' \in S$ and 
	$X \finsubseteq\atm_0$.
	The similarity  between $s$
	and $s'$
	in $S$
	relative to the set of features $X$,
	noted $ \prox{C}{s}{s'}{X}$,
	is defined as follows:
	\begin{align*}
	  \prox{C}{s}{s'}{X}= |\{p \in 
	  X: (C,s) \models p  \text{ iff } (C,s') \models  p \}|.  
	\end{align*}
  \end{definition}

  A dual 
  notion
  of distance
  between worlds
  can defined from
  the previous notion
  of similarity:
  \begin{align*}
       \distance{C}{s}{s'}{X}= |X| - \prox{C}{s}{s'}{X}.
  \end{align*}
  This notion of distance is in accordance with  \cite{dalal1988investigations} in knowledge revision.\footnote{There are other options besides measuring distance by cardinality, e.g., distance in sense of subset relation as \cite{borgida1985language}. We will consider them in further research.}

  The following definition
introduces the notion of counterfactual
conditional as an abbreviation. It is a form
of relativized conditional, i.e.,
a conditional with respect
to a finite set of features.\footnote{A similar approach to  conditional is presented in \cite{girard2016ceteris}. They also refine Lewis' semantics for counterfactuals by selecting the closest worlds according to not only the actual world and antecedent, but also a set of formulas noted  $\Gamma$. 
  The main technical difference is that they allow any counterfactual-free formula as a member of $\Gamma$, 
  while in our setting $X$ only contains atomic formulas.
  }
\begin{definition}[Counterfactual conditional]
   We write 
   $\varphi \Rightarrow_X \psi$
   to mean that 
   ``if $\varphi$ was true then $\psi$ would be true,
	relative to the set of features $X$''
 and define it as follows:
		\begin{align*}
	 \varphi \Rightarrow_X \psi 
		=_{\mathit{def}} 
		 \bigwedge_{ 0 \leq  k \leq |X| }
		 \big( 
		 \maxproxdef{\varphi }{X}{k} \rightarrow \bigwedge_{ Y \subseteq X : |Y|= k  }
		 [Y ] (\varphi \rightarrow \psi)
		 \big),
		\end{align*}
		with
		\begin{align*}
\maxproxdef{\varphi }{X}{k}=_{\mathit{def}} \bigvee_{ Y \subseteq X:
|Y|=k
} \langle Y \rangle \varphi \wedge 
\bigwedge_{ Y \subseteq X:
k<|Y|
} [ Y ] \neg \varphi.
\end{align*}
 \end{definition}

As the following proposition highlights,
the previous definition
of counterfactual conditional
is in line with Lewis' view: 
the conditional
  holds
  if all closest
  worlds
  to the actual
  world 
  in which
  the antecedent 
  is true
  satisfy the consequent
  of
  the conditional.
  
 \begin{proposition}\label{prop:cf in cp}
      Let 
	$C= (S, f)$
	be a classifier model,
	$s \in S$ and 
	$X \finsubseteq\atm_0$.
 Then,
 $(C,s) \models \varphi \Rightarrow_X \psi $
	if and only  if 
	$\closest{C}{s}{\varphi}{X} \subseteq ||\psi||_C $, 
	where
		\begin{align*}
\closest{C}{s}{\varphi}{X} = \argmax{ s' \in  ||\varphi||_C } \ \prox{C}{s}{s'}{X},
	\end{align*}
	and for every $\varphi \in  \mathcal{L} (\mathit{Atm}) $:
		\begin{align*}
	   ||\varphi||_C =
 \{s \in S : (C,s) \models \varphi
\}.
	\end{align*}
 \end{proposition}

  For notational convenience,
  we simply write $\varphi \Rightarrow  \psi$
  instead of
$\varphi \Rightarrow_{\atm_0} \psi$,
when $\atm_0$ is finite.
		Formula 
		$\varphi \Rightarrow  \psi$
		captures
		the standard notion of conditional of
		conditional logic.
One can show that $\Rightarrow$ satisfies all semantic conditions of Lewis' logic $\mathsf{VC}$.\footnote{A remarkable fact is that not all $\Rightarrow_X$ satisfy the \emph{strong centering} condition, which says that the actual world is the only closest world when the antecedent is already true there. To see it, consider a toy classifier model $	  (C,s)$ such that $S = \{s, s',s'',s''' \}$ with $s = \{p, q\}$, $s' = \{p \}$, $s'' = \{q\}$, $s''' = \emptyset $. We have $\closest{C}{s}{p}{\{p\}} = \{s,s'\}$, rather than $\closest{C}{s}{p}{\{p\}} = \{s\}$.
} 
However, when $\atm_0$ is infinite, $\phi \Rightarrow \psi$ is not a well-formed formula since it ranges over an infinite set of atoms.
In that case $\phi \Rightarrow_X \psi$
has to be always indexed by some finite $X$.

The interesting
aspect
of the previous notion
of counterfactual
conditional
is that it can be
used
to represent
a binary classifier's approximate decision
for a given instance.
Let us suppose
the set of decision values $\val$
includes a special symbol $?$
meaning that the classifier
has no sufficient information
enabling it to classify
an  instance in a 
precise way.
More compactly,
$?$
\emph{is interpreted as}
that
the classifier abstains from making a precise decision.
In this situation,
the classifier
can try to make an approximate  decision:
it  considers the closest instances
to the actual instance for
which it
has sufficient information
to make a  decision 
and checks whether the
decision is uniform
among all such instances.
In other words,
$c$
is 
the classifier's approximate classification of
(or decision for)
 the actual
instance relative
to the set of features $X$,
noted $\apprdec{X}{c}$, 
if and
only if 
``if a precise
decision was made
relative to the set of features $X$, then 
this decision would be $c$''. Formally:
	\begin{align*}
\apprdec{X}{c} =_{\mathit{def}} 
\big( \bigvee_{ c' \in \mathit{Val}  : c' \neq ? } 
\takevalue{c'} \big) \Rightarrow_X \takevalue{c}.
\end{align*}

The following proposition
provides two interesting 
validities.
\begin{proposition}\label{prop: apprDec}
Let $\atm_0$ be finite, $c, c' \in \mathit{Val} \setminus \{?\}  $. Then,
    \begin{align*}
     &\models_{\mathbf{CM}}     
     \mathsf{apprDec}(X, c)
     \rightarrow 
     \neg \mathsf{apprDec}(X, c')
     \text{ if }c \neq c',\\
 &\models_{\mathbf{CM}}  \takevalue{c}   \rightarrow   
 \mathsf{apprDec}(\atm_0, c).
    \end{align*}
\end{proposition}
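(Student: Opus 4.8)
Both claims unfold directly into the semantics of the underlying counterfactual, so the plan is to set up that translation once and then push it through in each case. Write $\psi := \bigvee_{y \in \val,\, y \neq ?} \takevalue{y}$ for the common antecedent, so that $\mathsf{apprDec}(x)$ abbreviates $\psi \Rightarrow \takevalue{x}$, i.e.\ $\psi \Rightarrow_{(\atm \setminus \dec)} \takevalue{x}$. By the definition of $\Rightarrow_X$, for any pointed model $(C,s)$ with $C = (S,f)$ we have $(C,s) \models \mathsf{apprDec}(x)$ iff $\closest{C}{s}{\psi}{\atm \setminus \dec} \subseteq ||\takevalue{x}||_C$. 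Two elementary observations carry the argument. \textbf{(i)} Since $f$ is a genuine function, no state satisfies two distinct decision atoms, so $||\takevalue{x}||_C \cap ||\takevalue{y}||_C = \emptyset$ whenever $x \neq y$. \textbf{(ii)} For the \emph{full} feature set $\atm \setminus \dec$ one has $\prox{C}{s}{s'}{\atm \setminus \dec} = |\atm \setminus \dec|$ iff $s = s'$, because states are exactly the subsets of $\atm \setminus \dec$; hence $s$ is the unique maximiser of $\prox{C}{s}{\cdot}{\atm \setminus \dec}$, and therefore $\Rightarrow$ (unlike a general $\Rightarrow_X$, as the footnote to Proposition~\ref{prop:cf in cp} notes) validates \emph{strong centering}: $(C,s) \models \varphi$ implies $\closest{C}{s}{\varphi}{\atm \setminus \dec} = \{s\}$.

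For the first validity, argue by contradiction: suppose $(C,s) \models \mathsf{apprDec}(x) \wedge \mathsf{apprDec}(y)$ with $x \neq y$. Then $\closest{C}{s}{\psi}{\atm \setminus \dec}$ is contained in both $||\takevalue{x}||_C$ and $||\takevalue{y}||_C$, hence in their intersection, which is empty by (i); so $\closest{C}{s}{\psi}{\atm \setminus \dec} = \emptyset$. This is the step needing care, and what I expect to be the main obstacle: an $\arg\max$ over a finite domain is empty exactly when that domain is empty, so $\closest{C}{s}{\psi}{\atm \setminus \dec} = \emptyset$ forces $||\psi||_C = \emptyset$, i.e.\ $f$ never returns a precise value. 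The intended statement therefore carries the mild hypothesis --- implicit in the very notion of an approximate decision --- that the classifier issues a precise decision on at least one input; under it, $||\psi||_C \neq \emptyset$ gives $\closest{C}{s}{\psi}{\atm \setminus \dec} \neq \emptyset$, contradicting the above, so $\mathsf{apprDec}(x)$ and $\mathsf{apprDec}(y)$ are never jointly true. (In the degenerate case $f \equiv ?$ both atoms hold vacuously at every state, which is the only way the claim could fail.)

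The second validity falls out of (ii). Suppose $(C,s) \models \takevalue{x}$, i.e.\ $f(s) = x$. Since $x \in \val \setminus \{?\}$, $\takevalue{x}$ is one of the disjuncts of $\psi$, so $(C,s) \models \psi$; by strong centering $\closest{C}{s}{\psi}{\atm \setminus \dec} = \{s\}$, and $s \in ||\takevalue{x}||_C$, whence $\closest{C}{s}{\psi}{\atm \setminus \dec} \subseteq ||\takevalue{x}||_C$, i.e.\ $(C,s) \models \mathsf{apprDec}(x)$. Since $(C,s)$ was arbitrary, $\models_{\mathbf{CM}} \takevalue{x} \to \mathsf{apprDec}(x)$. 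Apart from the emptiness point flagged above, everything here is a direct unfolding of the definitions via observations (i) and (ii).
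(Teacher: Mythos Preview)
The paper states this proposition without proof, so there is no authorial argument to compare against. Your unfolding via the semantics of $\Rightarrow$ is the natural route and is carried out correctly. The second validity is clean: strong centering for the full feature set $\atm \setminus \dec$ gives $\closest{C}{s}{\psi}{\atm \setminus \dec} = \{s\}$ whenever $(C,s)\models\psi$, and the rest is immediate.

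For the first validity you have in fact spotted a genuine defect in the \emph{statement} rather than a gap in your proof. If $f(s)=?$ for every $s\in S$ then $||\psi||_C=\emptyset$, the $\arg\max$ is empty, and $\psi\Rightarrow\vx$ holds vacuously for every $x$; hence $\mathsf{apprDec}(x)\wedge\mathsf{apprDec}(y)$ is satisfiable in $\mathbf{CM}$ for $x\neq y$. Your diagnosis and the fix you propose (assume the classifier issues at least one precise decision, equivalently $||\psi||_C\neq\emptyset$) are exactly right; under that hypothesis your contradiction argument via observation (i) goes through. This is not a flaw in your reasoning but an oversight in the paper's formulation, and flagging it is the appropriate move.
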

According to the first validity,
a classifier cannot make two different approximate decisions 
relative to a fixed set of features $X$.

According to the second validity, 
if
the classifier is able
to make a precise decision for a given instance,
then its approximate decision coincides with it.
This second validity works
since the actual state/instance is the only closest
state/instance to itself. Therefore,
it the actual state/instance has a precise classification $c$,
all its closest states/instances also have it.

It is worth noting that the following formula
is not valid
relative to the class $\mathbf{CM} $:
   \begin{align*}
 \bigvee_{ c \in \mathit{Val} \setminus \{?\}   } 
 \mathsf{apprDec}(X, c).
    \end{align*}
    This means that a classifier may be unable to  approximately 
    classify the actual instance.
    The reason is that there could be different closest states to the actual one with different classifications. 


\section{Explanations and Biases} \label{sec:Xps}


In this section,
we are going to formalize some existing notions
of explanation of classifiers in our logic, and deepen the current study 
from a (finitely) Boolean setting to a multi-valued output, partial domain and possibly infinite-variable setting.
For this purpose it is necessary to introduce the following notations.

Let $\lambda$ denote a conjunction of finitely many literals, where a literal is an atom $p$ or its negation $\neg p$.
We write $\lambda \subseteq \lambda'$, call $\lambda$ a part (subset) of $\lambda'$, if all literals in $\lambda$ also occur in $\lambda'$; and $\lambda \subset \lambda'$ if $\lambda \subseteq \lambda'$ but not $\lambda' \subseteq \lambda$.
By convention $\top$ is a term of zero conjuncts.
In particular, suppose $\lambda$ is $\conj{X}{Y}$ for some $X \subseteq Y \finsubseteq \atm_0$, 
then $\overline{\lambda}$
will denote the conjunction resulting
from ``flipping" (or ``perturbing")  all literals of $\lambda$, i.e., $\conj{Y \setminus X}{Y}$.

In the glossary of Boolean classifiers, $\lambda$ is called a \emph{term} or \emph{property} (of an instance).
The set of terms
is noted $ \mathit{Term}$. 
We use $\mathit{Term}(X)$ to denote all terms whose atoms are in $X$.
Additionally, to define
the notion of bias we 
distinguish the set of protected features $\pf \subseteq  \atm_0$, like `gender' and `race', and the set of non-protected features $\nf=  \atm_0 \setminus \pf$.

Notice that in this section the cardinality of $\atm_0$ matters.
Notions and results in Section \ref{sec: pimp and axp} (without special instruction) apply
to both 
$\atm_0$ finite
and $\atm_0$
countably infinite. 
On the contrary, in Sections \ref{sec: cxp} and \ref{sec: bias}, we restrict to the
case $\atm_0$ finite, which is due to
the use of formulas $[\atm_0 \setminus X]\phi, [\nf]\phi$ and $[\pf]\phi$ there. 
We clarify it here instead of clarifying it below repeatedly.


\subsection{Prime Implicant and Abductive Explanation}\label{sec: pimp and axp}

We are in position to formalize
the notion of 
\textit{prime implicant}, which plays a fundamental role in the theory of Boolean functions since \cite{quine1955way}. 

\begin{definition}[Prime implicant (\pimp)]\label{def:PI}
We write 
$\pimp(\lambda, c)$
to mean that $\lambda$ is a \emph{prime implicant} for $c$
and define it as follows:
\begin{align*}
\pimp(\lambda, c) =_{\mathit{def}} 
    \allins \Big( \lambda \to \big(\takevalue{c} \wedge 
    \bigwedge_{p \in \atm(\lambda)} \langle \atm(\lambda)\setminus\{p\}\rangle \neg \takevalue{c}  \big) \Big). 
\end{align*}

\end{definition}
It is a proper extension of the definition of prime implicant in the Boolean setting since it is a term $\lambda$ such that 
1) it necessarily implies the actual classification (why it is called an \emph{implicant}); 2) any
of its proper subsets fails to necessarily imply the actual classification (why it is called \emph{prime}).
Notice that being a prime implicant is a global property of the classifier, though we formalize it by means of a pointed model.
The syntactic abbreviation
for prime implicant can be better understood by observing that
for a given CM $C=(S,f)$
and $s\in S$, we have: 
\begin{align*}
 (C,s ) \models 
\pimp(\lambda, c) \text{ iff } &
 (i)   \ \forall s' \in  S,
 \text{ if }  (C,s')\models \lambda 
\text{ then } (C,s')\models \takevalue{c}; \text{ and } \\
& (ii)  \ \forall \lambda' \subset \lambda, \exists 
 s'   \in S
 \text{ such that }   
 (C,s')\models \lambda ' \wedge  \neg \takevalue{c}.
\end{align*}
To explain the actual classification of a given input,
some XAI researchers
consider prime implicants
which are actually true. We use the terminology
by \cite{ignatiev2019abduction} and call them
abductive explanations (AXp).
\begin{definition}[Abductive explanation (\axp)]
 We write 
$\axp(\lambda, c)$
to mean that
$\lambda$ \emph{abductively explains}
the decision $c$
and define it as follows:
\begin{align*}
\axp(\lambda, c) =_{\mathit{def}} 
   \lambda \wedge \pimp(\lambda, c). 
\end{align*}
\end{definition}





AXp is a local explanation, because $\lambda$ is not only a prime implicant for the classification, but also a property of the actual instance
to be classified. 
AXp
can be expanded  to highlight
its connection with the notion of
variance/invariance.
\begin{proposition} \label{prop: alternative AXp def}
    Let $\lambda \in \mathit{Term}$
    and $c \in \val$. Then, we have the following validity:
    \begin{align*}
        \models_{\mathbf{CM}} \axp(\lambda, c) \leftrightarrow \big(\lambda \wedge [\atm(\lambda)]\takevalue{c} \wedge \bigwedge_{p \in \atm(\lambda)} \langle \atm(\lambda) \setminus \{p\} \rangle \neg \takevalue{c}  \big).
    \end{align*}
\end{proposition}
The formula $[\atm(\lambda)] \takevalue{c}$ 
expresses 
the idea 
of  invariance under intervention (perturbation): as long as the explanans variables 
are kept fixed, namely the variables in  $\lambda$,
any perturbation on the other variables does not change the explanandum, namely classification $c$.

Many names besides AXp are found in literature, e.g., \textit{PI-explanation} \cite{shih2018formal} and \textit{sufficient reason} \cite{DBLP:conf/ecai/DarwicheH20}. 
Darwiche and Hirth in  \cite{DBLP:conf/ecai/DarwicheH20} proved that any decision has a sufficient reason in the Boolean setting.
The result is not a surprise, for a  Boolean function always has a prime implicant, since by definition the arity  of a Boolean function is always finite.
However, since we allow functions with infinitely many variables, AXps are not guaranteed to exist  in general.

\begin{fact}\label{prop: pimp not always when not findef}
    Let $\atm_0$ be countably infinite and $|\val| > 1$.
    Then, 
    there exists  some $C = (S, f)$, $s \in S$, such that $\exists c \in \val, \forall \lambda \in \mathit{Term},$ $(C, s) \models \neg \axp(\lambda, c)$.
\end{fact}
The statement can be proved  by exhibiting the same  infinite 
countermodel as in  Fact \ref{fact: infinite indefinite} in Section \ref{sec:discussion}. 
However, if a CM is $X$-definite for some $X \finsubseteq \atm_0$, then every state has an AXp, even when the CM is infinite.



\begin{proposition}\label{prop: axp always when findef}
       Let 
    $C = (S, f) \in \mathbf{CM}$
     and $X \finsubseteq \atm_0$.
     If $C$ is $X$-definite
      then $\forall s \in S,
      \exists \lambda \in \mathit{Term}
      \text{ such that  }
      (C, s) \models \axp\big(\lambda, f(s) \big)$.
\end{proposition}



Lastly, let us continue with the Alice example.
\begin{example}
Recall the state of Alice $s = \{center, employed$\}. We have 
$(C, s) \models \axp(\neg male \wedge \neg owner, 0)$, 
namely that Alice's being female and not owning a property abductively explains the rejection
of her application.
\end{example}


\subsection{Contrastive Explanation (CXp)} \label{sec: cxp}
AXp is a minimal part of the actual
instance 
guaranteeing  the current decision.
A natural counterpart of AXp
is
contrastive explanation (CXp, named in \cite{ignatiev2020contrastive}).


\begin{definition}[Contrastive explanation (CXp)]
 We write 
$\cxp(\lambda, c)$
to mean that
$\lambda$ \emph{contrastively explains}
the decision $c$
and define it as follows:
\begin{align*}
\cxp(\lambda, c) =_{\mathit{def}}  &
   \lambda \wedge
        \langle \mathit{Atm}_0 \setminus \mathit{Atm}(\lambda) \rangle \neg \takevalue{c}      \wedge \\
        & \bigwedge_{p \in Atm(\lambda)} [(\mathit{Atm}_0 \setminus Atm(\lambda)) \cup \{p\}] \takevalue{c}. 
\end{align*}
\end{definition}

The definition says nothing but 1) $\lambda$ is part of  the actual input instance; 2) if
the  values
of all variables 
in $\lambda$ are changed
while the values of
the other variables are kept fixed, then  the actual classification may change; 3) the classification will not change, if the variables outside $\lambda$
and at least one variable
in $\lambda$
keep their actual values.
The latter captures a form of necessity:
when the values
of the variables outside $\lambda$
are kept fixed, 
all variables in $\lambda$
should be \emph{necessarily} perturbed to change the actual classification. 


The syntactic abbreviation
for contrastive explanation  can be better understood by observing that
for a given CM $C=(S,f)$
and $s\in S$, we have: 
\begin{align*}
 (C,s ) \models 
\cxp(\lambda, c) \text{ iff } 
& (i)  \ (C, s) \models \lambda; \\
& (ii)  \ \exists s' \in  S \text{ s.t. }
 s \triangle s'=\atm(\lambda)
 \text{ and }
 (C, s') \models \neg \takevalue{c};
 \text{ and }\\
 & (iii)  \ \forall s' \in  S, \text{ if }
 s \triangle s'\subset \atm(\lambda)
 \text{ then }
 (C, s') \models  \takevalue{c}.
\end{align*}

CXp has a counterfactual flavor
since it answers to question:
would the classification  differ from the actual one,
if the values of all
variables in the explanans 
were different?
So, there seems to be a connection
with the
notion of counterfactual conditional we 
introduced in Section  \ref{sec:counterfac}.
Actually in XAI, many researchers consider contrastive explanation and counterfactual explanation either closely related \cite{verma2020counterfactual} or even interchangeable \cite{sokol2019counterfactual}.
The following proposition sheds light on this point. 
\begin{proposition} \label{prop:CXp&Counterfac}
Let $\lambda $
be a term
and let $l$
be a literal. Then, we have the following two validities:
    \begin{align*}
        \models_{\mathbf{CM}} & \cxp(\lambda, c) \rightarrow
          \Big(
                \takevalue{c} \wedge 
       \big( \overline{\lambda} \Rightarrow \neg \takevalue{c} \big ) \Big),\\
                \models_{\mathbf{CM}} & \mathsf{Comp}(\atm_0) \to \Big( \cxp(l, c) \leftrightarrow
                \Big(
                \takevalue{c} \wedge 
     \big(\neg l  \Rightarrow \neg \takevalue{c} \big)
                \Big) \Big). 
    \end{align*}
\end{proposition}
According to the first validity,
in the general case
contrastive explanation implies
counterfactual explanation. 
According to the second validity,
when 
the explanans is a literal (a single-conjunct term),
contrastive explanation
coincides with counterfactual explanation given $\atm_0$-completeness.
Particularly,
literal $l$
contrastively explains
the decision $c$
if and only if
(i) the actual decision is $c$
and (ii)
if  literal $l$ 
were perturbed, 
the decision would be different from $c$. 
In other words,
in the ``atomic'' case
under the completeness assumption,
CXp is the same as counterfactual explanation. 
 
Note that
the right-to-left direction
of the first validity does not necessarily hold, even after assuming that the classifier
is complete with respect
to the set of all
features $\mathit{Atm}_0$. 
To see this, it is sufficient to 
suppose that 
$\mathit{Atm}_0=\{p,q\}$
and $\dec=\{0,1\}$
and to 
consider the CM
$(S,f)$
such that $S=2^{\mathit{Atm}_0 }$
with $f\big( \{p,q \}\big)= 0 $
and 
$f\big( \{p \}\big)= 
f\big( \{q \}\big)= 
f\big( \emptyset \big)= 1 $.
It is easy to check that 
in the model so defined we have
\begin{align*}
   \big(C, \{p,q \} \big) \models 
 \takevalue{0}  \wedge 
    \big( \overline{p \wedge q } \Rightarrow \neg \takevalue{0} \big ),
\end{align*}
but at the same time, 
\begin{align*}
  \big(C, \{p,q \} \big) \models 
\neg \cxp(p \wedge q , 0). 
\end{align*}
The problem is that the model
fails to satisfy the necessity  condition
of contrastive explanation:
it is not necessary to perturb 
 both literals in 
$p \wedge q$
to change the actual decision from $0$ to $1$,
it is sufficient to perturb one of them. 
We can conclude that CXp is a special kind of counterfactual explanation
with the additional requirement of necessity for the explanans.



\begin{example}
In Alice's case, we have $(C, s) \models \cxp(\neg male, 0) \wedge \cxp(\neg owner, 0)$. This means that both Alice's being female and not owning property contrastively explain the rejection.
Moreover, we have $(C, s) \models (\neg male \vee \neg owner) \Rightarrow \takevalue{1} $, namely if Alice was  a male or an owner (of an immobile property), then her application would have  been accepted.
\end{example}
Moreover, since
the feature `gender' is hard to change, owing a property is the (relatively) \emph{actionable} explanation for Alice,\footnote{For the significance of actionability  in XAI, see e.g. \cite{sokol2019counterfactual}.}
if she intends to comply with the classifier's decision. 
But surely Alice has another option, i.e., alleging the classifier as biased.
As we will see in the next subsection, an application of CXp is to detect decision biases
in a classifier.


\subsection{Decision Bias} \label{sec: bias}

A primary goal of XAI is to detect and avoid biases.
Bias is understood as making decision with respect to some protected features, e.g., `race', `gender' and `age'.

There is a widely accepted notion of decision bias in the setting of Boolean functions
which can be represented  
 in our Example \ref{ex:AliceDebut}  (see \cite{DBLP:conf/ecai/DarwicheH20,ignatiev2020towards}). 
Intuitively, the rejection for Alice is biased if there is another applicant, say Bob, who only differs from Alice on
the protected feature `gender', but gets accepted.




\begin{definition}[Decision bias]
 We write 
$\bias(c)$
to mean that
the
decision
$c$ \emph{is biased}
and define it as follows:
\begin{align*}
\bias(c) =_{\mathit{def}} \takevalue{c} \wedge \langle \nf \rangle \neg \takevalue{c}. 
\end{align*}
\end{definition}

The definition says that the decision $c$  is biased
at a given state $s$, if (i) 
     $f(s)=c$,
     and 
     (ii) 
     $\exists s' \in S$ such that $s \triangle s'  \subseteq  \pf $ and $f(s') \neq c$. The latter, in plain words, requires another instance $s'$, which only differs from $s$ on some protected features, but obtains a different classification.

As we stated, CXp can be used to detect decision biases.
The following result makes the statement precise.

\begin{proposition} \label{prop:Bias&CXp}
    We have the following validity:
    \begin{align*}
        \models_{\mathbf{CM}} & \bias(c) \leftrightarrow \bigvee_{\mathit{Atm}(\lambda) \subseteq \pf} \cxp(\lambda, c).
    \end{align*}
\end{proposition}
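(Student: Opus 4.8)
The statement is an iff, so I would prove the two directions separately, moving between the semantic characterizations that have already been established. Fix $C=(S,f)\in\mathbf{CM}$ and $s\in S$ with $f(s)=x$; all the notions below are evaluated at $(C,s)$, and by the previous propositions $\bias(x)$ and each $\cxp(\lambda,x)$ can be read either as $\mathcal{L}(\atm)$-formulas or directly through their defining semantic conditions. I expect to work almost entirely with the semantic conditions rather than the syntactic abbreviations, since they are more transparent here.

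\textbf{From right to left.} Suppose $(C,s)\models\cxp(\lambda,x)$ for some $\lambda$ with $\atm(\lambda)\subseteq\pf$. By definition of $\cxp$ there is $s'\in S$ with $s\triangle s'=\atm(\lambda)$ and $f(s')\neq x$. Since $\atm(\lambda)\subseteq\pf$ we get $s\triangle s'\subseteq\pf$, which is exactly the witness required by the definition of $\bias(x)$; hence $(C,s)\models\bias(x)$. This direction is essentially immediate once one unfolds both definitions.

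\textbf{From left to right.} This is the substantive direction and the one I expect to be the main obstacle. Suppose $(C,s)\models\bias(x)$, so there is $s'\in S$ with $s\triangle s'\subseteq\pf$ and $f(s')\neq x$. The set $D=s\triangle s'$ is a set of protected features that ``flips'' the decision, but it need not be \emph{minimal} — a $\cxp$ requires minimality of $\atm(\lambda)$. So the plan is: consider all subsets $D'\subseteq D$ such that the state $s''$ obtained from $s$ by toggling exactly the features in $D'$ (i.e.\ the unique $s''$ with $s\triangle s''=D'$) satisfies $f(s'')\neq x$; this family is nonempty since $D$ itself belongs to it; pick a $\subseteq$-minimal such $D^\ast$. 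Now let $\lambda$ be the term $\bigwedge_{p\in s\cap D^\ast}p\wedge\bigwedge_{p\in D^\ast\setminus s}\neg p$, so that $\atm(\lambda)=D^\ast$ and $\lambda\subseteq\widehat s$. I then need to check the three clauses of the $\cxp$ definition: (i) $\lambda\subseteq\widehat s$, which holds by construction; (ii) there is $s''$ with $s\triangle s''=\atm(\lambda)$ and $f(s'')\neq x$ — this is precisely the defining property of $D^\ast$; (iii) for every $s'''$ with $s\triangle s'''\subsetneq\atm(\lambda)=D^\ast$ we have $f(s''')=x$ — this is exactly the minimality of $D^\ast$, using that any such $s'''$ is the unique state toggling the features $s\triangle s'''\subsetneq D^\ast$. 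Hence $(C,s)\models\cxp(\lambda,x)$, and since $\atm(\lambda)=D^\ast\subseteq D\subseteq\pf$ the disjunct for this $\lambda$ on the right-hand side is satisfied.

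\textbf{Where the care is needed.} The only genuinely delicate point is the bijection between subsets $D'\subseteq(\atm\setminus\dec)$ and states reachable from $s$: for each $D'$ there is exactly one $s''$ with $s\triangle s''=D'$ (namely $s''=(s\setminus D')\cup(D'\setminus s)$), and $D'\subseteq D''$ iff the corresponding symmetric-difference relation $s\triangle s'''\subseteq\atm(\lambda)$ holds as required in clause (iii). Once this correspondence is spelled out, the extraction of a minimal flipping set and the verification of the three $\cxp$ clauses are routine; there is no need to pass through the $\mathcal{L}(\atm)$-encodings of $\bias$ or $\cxp$ at all, though one could alternatively phrase the argument with the $\langle\,\cdot\,\rangle$/$[\,\cdot\,]$ formulas if a purely syntactic proof were wanted.
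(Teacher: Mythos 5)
Your proof is correct, and its mathematical core coincides with the paper's: the right-to-left direction is the same immediate unfolding of the two definitions, and the hard direction in both cases rests on the fact that a protected flipping set can be shrunk to a minimal one which then witnesses a contrastive explanation. The difference is organizational. The paper proves left-to-right by contraposition: it assumes $(C,s)\models\neg\cxp(\lambda,x)$ for every $\lambda$ with $\atm(\lambda)\subseteq\pf$ and concludes that no $s'$ with $s\triangle s'\subseteq\pf$ flips the decision — but it states this step very tersely, without explaining why the failure of all such $\cxp$'s (which could a priori be due to the minimality clause rather than the flipping clause) excludes any flip. Your direct argument supplies exactly that missing detail: from a bias witness $D=s\triangle s'\subseteq\pf$ you extract a $\subseteq$-minimal flipping subset $D^\ast$, use the bijection $D'\mapsto(s\setminus D')\cup(D'\setminus s)$ between subsets of $\atm\setminus\dec$ and states, and verify the three clauses of the $\cxp$ definition, correctly observing that proper subsets of $D^\ast$ are still subsets of $D$, so minimality yields clause (iii); and $D^\ast\subseteq D\subseteq\pf$ gives the required disjunct. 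So your write-up is, if anything, more complete than the paper's. The only cosmetic remark is that the claimed validity also covers pointed models with $f(s)\neq x$; there both sides are false because their $\mathcal{L}(\atm)$-encodings conjoin $\vx$, so fixing $f(s)=x$ at the outset loses nothing, but it is worth one sentence.
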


Let us end up the whole section by answering the last question regarding Alice raised at the end of Section \ref{subsec:ClassifierModel}.

\begin{example}
Split $\atm_0$ in Example \ref{ex:AliceDebut} into $\pf = \{male, center\}$ and $\nf = \{employed, owner\}$. We then have $(C , s) \models \bias(0) \wedge  \cxp(male, 0) \wedge \big(\neg male \Rightarrow \takevalue{1} \big)$. The decision for Alice is biased since `gender' is the protected feature which contrastively explains the rejection, and if Alice was  male, her  application would have been accepted.
\end{example}




\section{Extensions}

In this section, we 
briefly discuss two interesting   extensions  of our logical
framework and analysis
of binary classifiers. 
Their
full development is left for future work.

\subsection{Dynamic Extension}\label{dynsection}

The first extension
we want to discuss consists
in adding to 
the  language 
$\mathcal{L} (\mathit{Atm})$ 
dynamic operators of the form
$[ \assign{c}{\varphi}  ]$ with 
$c \in \val$,
where $\assign{c}{\varphi} $
is a kind of assignment in the sense
of \cite{DBLP:journals/iandc/BenthemEK06,DBLP:conf/atal/DitmarschHK05}
and the formula
$[\assign{c}{\varphi}] \psi$
has to be read 
``$\psi$
holds after every decision  is
set to $c$ in context $\varphi$''.
The resulting language,
noted $\mathcal{L}^{\mathit{dyn}} (\mathit{Atm})$,
is defined by the following grammar:
\begin{center}\begin{tabular}{lcl}
 $\varphi$  & $::=$ & $ p \mid \takevalue{c} \mid
  \neg\varphi \mid \varphi\wedge\varphi \mid [X]\varphi
  \mid [\assign{c}{\varphi}] \psi,
$
\end{tabular}\end{center}
where $p$ ranges over $\atm_0$, 
$c$ ranges over $\val$,
and $X  \finsubseteq \atm_0$.
The interpretation of formula
$[\assign{c}{\varphi}] \psi$
relative to a
pointed
 classifier model  $(C,s)$
 with $C= (S, f)$
 goes as follows:
	\begin{eqnarray*}
     (C,s) \models [\assign{c}{\varphi} ] \psi &\Longleftrightarrow &
     (C^{c:=\varphi}, s) \models \psi,
	\end{eqnarray*}
 where 
$ C^{c:=\varphi}= (S, f^{c:=\varphi})$
is the updated classifier model where,
for every $s' \in S$:
 \begin{align*}
   f^{c:=\varphi}(s')= 
      \begin{cases}
c  \text{ if } (C,s') \models   \varphi,\\
    f (s') \text{ otherwise}.
\end{cases}
 \end{align*}

Intuitively,
the operation 
$\assign{c}{\varphi}$
consists in globally
classifying all instances 
satisfying $\varphi$
with value $c$.


Dynamic operators 
$[\assign{c}{\varphi}]$ are useful
for modeling 
a classifier's revision.
Specifically, new knowledge can be injected into
the classifier thereby leading
to a change in its classification.
For example, the classifier could learn that 
if an object is
a furniture,
has one or more legs and has a flat top,
then it is a table.
This is captured by
the following assignment:
\begin{align*}
   \assign{\mathtt{table}}{\mathit{objIsFurniture}
   \wedge \mathit{objHasLegs} \wedge \mathit{objHasFlatTop} } .
\end{align*}

An application of dynamic change is to model the training process of a classifier, together with counterfactual conditionals with $``?"$ in Section \ref{sec:counterfac}. 
Suppose at the beginning we have a CM $C = (S, f)$ which is totally ignorant, i.e., $\forall s \in S, f(s) = ?$.
We then prepare to train the classifier. The training set consists of pairs $(s_1, x_1), (s_2, x_2) \dots (s_n, x_n)$ where $\forall i \in \{1, \dots, n\}$, $s_i \in S, x_i \in (\val \setminus \{?\})$ and $\forall j \in \{1, \dots, n\}$, $i \neq j$ implies $s_i \neq s_j$.
We train the classifier by revising it with $[x_1 := \widehat{s}_1] \dots [x_n := \widehat{s}_n]$ one by one. Obviously the order does not matter here.
In other words, we re-classify some states.
With a bit abuse of notation, let $C^{\mathit{train}} = (S, f^{\mathit{train}})$ denote the model resulting from the series of revisions.
We finish training by inducing the final model $C^\dagger = (S, f^\dagger)$ from $C^{\mathit{train}}$, where
$\forall s \in S, f^\dagger(s) = c$, if 
$(C^{train}, s) \models \apprdec{\atm_0}{c} $,
otherwise $f^\dagger(s) = f^{\mathit{train}}(s)$.
This is an example of modeling a special case of the so-called \emph{$k$-nearest neighbour (KNN) classification} in machine learning \cite{cunningham2020k}, where the distance is measured by cardinality. 
If a new case/instance has to be classified, we see how the most similar cases
to the new case  were classified.
If all of them ($k$
of them in the case of KNN) were classified using the same category,
we put the new case into that category.


The logics
$\bcldc$ 
and 
$\wbcldc$ ($\bcl$ and $\wbcl$ with Decision Change)
extend the logic
$\bcl$ and $\wbcl$
by the dynamic
operators $[\assign{c}{\varphi}] $.
They are defined as follows.

\begin{definition}[Logics $\bcldc$
and $\wbcldc$]\label{axiomatics2 dyn}
We define  $\bcldc$ (resp. $\wbcldc$)
to be the extension of  $\bcl$ (resp. $\wbcl$) of Definition \ref{axiomatics}
(resp. Definition \ref{axiomatics2})
 generated by the following reduction axioms for the dynamic operators 
 $[\assign{c}{\varphi} ] $:
 \begin{align*}
  [\assign{c}{\varphi} ] \takevalue{c} \leftrightarrow& \big(  \varphi
  \vee \takevalue{c}\big) \\
    [\assign{c}{\varphi} ] \takevalue{c'} \leftrightarrow& 
    \big(\neg \varphi \wedge \takevalue{c'} \big)
    \text{ if } c \neq c'   \\
     [\assign{c}{\varphi} ] p \leftrightarrow& p\\
  [\assign{c}{\varphi} ] \neg \psi \leftrightarrow &
 \neg [\assign{c}{\varphi} ] \psi\\
   [\assign{c}{\varphi} ] (\psi_1 \wedge \psi_2) \leftrightarrow& 
   \big([\assign{c}{\varphi} ] \psi_1 \wedge
   [\assign{c}{\varphi} ] \psi_2 \big)\\
     [\assign{c}{\varphi} ] [X]\psi\leftrightarrow &
 [X] [\assign{c}{\varphi} ] \psi
  \end{align*}
and the following rule of inference:
\begin{align}
&  \frac{\varphi_1 \leftrightarrow \varphi_2}{
\psi \leftrightarrow \psi[\varphi_1/\varphi_2]
} \tagLabel{RE}{ax:rulere} 
\end{align}
\end{definition}

It is routine
exercise
to verify that 
the equivalences
in Definition \ref{axiomatics2 dyn}
are valid for the class
$\mathbf{CM}$
and that the rule of replacement of equivalents 
(\ref{ax:rulere})
preserves validity.
The completeness of $\bcldc$
(resp. $\wbcldc$)
for this class of models
under the finite-variable
assumptions
(resp. infinite-variable assumption) 
 follows from Corollary
 \ref{theo:comp1}
 (resp. Corollary
 \ref{theo:comp1 inf}), in view of the fact that the reduction axioms
and the rule of replacement of proved
equivalents
 can be used to find, for any 
 $\mathcal{L}^{\mathit{dyn}} $-formula, a provably equivalent 
  $\mathcal{L} $-formula.

\begin{theorem}
Let $\mathit{Atm}_0
$
be finite.
Then,
the logic 
$\bcldc$
is sound and complete relative
to the class $\mathbf{CM}$.
\end{theorem}

\begin{theorem}
Let $\mathit{Atm}_0
$
be countably infinite.
Then,
the logic 
$\wbcldc$
is sound and complete relative
to the class $\mathbf{CM}$.
\end{theorem}

The following complexity results
are consequences of
Theorems
\ref{theo:comp2}
and
\ref{theo:comp3}
and the fact that
via 
the reduction axioms
in Definition \ref{axiomatics2}
we
can find
a polynomial
reduction
of satisfiability
checking for formulas
in $\mathcal{L}^{\mathit{dyn}} $
to 
satisfiability
checking for formulas
in $\mathcal{L} $.

\begin{theorem}\label{theo:compl3}
Let $\mathit{Atm}_0
$
be finite and fixed.
Then,
checking satisfiability
of formulas in $\mathcal{L}^{\mathit{dyn}}(\mathit{Atm})$
relative to 
$\mathbf{CM}$ 
can be done in polynomial time. 
\end{theorem}

\begin{theorem}\label{theo:compl4}
Let $\mathit{Atm}_0
$
be countably infinite.
Then,
checking satisfiability
of formulas in $\mathcal{L}^{\mathit{dyn}}(\mathit{Atm})$
relative to 
$\mathbf{CM}$ 
is NEXPTIME-complete. 
\end{theorem}

\subsection{Epistemic Extension }\label{epext}

In the second
extension
we suppose that 
a classifier
is  an agent
which has to
classify 
what it perceives.
The agent could have
uncertainty about the actual instance to be classified
since it cannot see all its input features.

In order to represent the agent's epistemic
state and uncertainty, we 
introduce an epistemic
modality of the form
$\mathsf{K}$
which 
is used to represent what the agent 
knows
in the light of what it sees.
Similar notions
of visibility-based
knowledge
can be found
in 
\cite{DBLP:conf/kr/CharrierHLMS16,DBLP:conf/atal/HoekTW11,DBLP:conf/lori/HerzigLM15,DBLP:conf/aamas/HoekIW12}.

The language 
for our epistemic extension 
is
noted $\mathcal{L}^{\mathit{epi}} (\mathit{Atm})$
and  defined by the following grammar:
\begin{center}\begin{tabular}{lcl}
 $\varphi$  & $::=$ & $ p \mid 
 \takevalue{c}\mid 
  \neg\varphi \mid \varphi\wedge\varphi \mid [X]\varphi
  \mid \mathsf{K}\varphi,
$
\end{tabular}\end{center}
where $p$ ranges over $\atm_0$, 
$c$ ranges over $\val$,
and $X  \finsubseteq \atm_0$.

In order to interpret the new modality
$ \mathsf{K} $,
we have to enrich classifier models
with an epistemic component.
\begin{definition}[Epistemic classifier model]\label{def:depimodel}
	An epistemic classifier  model (ECM)
	is a tuple
$E= (S, f,\mathit{Obs})$
where 
$C= (S, f)$
is a classifier model
and $\mathit{Obs} \subseteq \atm_0$
	is the set of atomic propositions
	that are visible to the agent.
	The class
	of ECMs
is noted 	$\mathbf{ECM}$.
	\end{definition}
	

Given an ECM
$E= (S, f,\mathit{Obs})$,
we can define an epistemic indistinguishability
relation
which represents
the agent's uncertainty about the actual
input instance. 
\begin{definition}[Epistemic 
indistinguishability relation]
Let 
$E= (S, f,\mathit{Obs})$ be an ECM.
Then, $\sim$
is the binary relation on $S$
such that, for all $s,s' \in S$:
		\begin{align*}
	    s \sim s' \text{ if and only if }
 (s \cap \mathit{Obs}) = (s' \cap \mathit{Obs}).
	\end{align*}
\end{definition}	
Clearly, the relation $\sim$
so defined is an equivalence relation.
According
to the previous
definition, the agent cannot distinguish between two states
$s$
and
$s'$, noted $s \sim s'$,
if and only if
 the truth values
of the visible variables
are the same at $s$
and $s'$.


The interpretation for formulas in 
$\mathcal{L}^{\mathit{epi}} (\mathit{Atm})$
extends the interpretation 
for formulas in 
$\mathcal{L} (\mathit{Atm})$
given in Definition 
\ref{truthcondCM} by the following condition
for the epistemic operator:
	\begin{eqnarray*}
     (E,s) \models \mathsf{K} \varphi &\Longleftrightarrow &
     \forall s' \in S: \text{ if } s \sim s' \text{ then }
     (E,s') \models \varphi.
	\end{eqnarray*}

	 As the following theorem indicates,
	 the complexity result
	 of Section \ref{sec:axiomcompl}
	 for the finite-variable case
	 generalizes to the epistemic extension.

\begin{theorem}\label{theo:complepi}
Let $\mathit{Atm}_0
$
be finite.
Then,
checking satisfiability
of formulas in $ \mathcal{L}^{\mathit{epi}} (\mathit{Atm})$ 
relative to 
$\mathbf{ECM}$ 
can be done in polynomial time. 
\end{theorem}

In order to illustrate
the intuition behind 
 the epistemic modality 
  $\mathsf{K}$
  we go back to the example 
  of the application for a loan
  to a bank. 
  
  \begin{example}
  Suppose the application is  submitted
  through an online system
  which  has to automatically
  decide whether it is acceptable or not. 
  In his/her 
  application, an applicant has to specify
  a value
  for each feature.
  Moreover, 
  suppose the system receives an incomplete application:
    the applicant
  has only indicated
  that she is female,
  owns an apartment and lives
  in the city center, but she
  has forgotten
  to specify whether she has an employment or not.
  In this case,
  the  value of the employement  variable 
   is not ``visible''
  to the system. 
  In formal terms,
  we extend the CM
  given in Example 
  \ref{ex:AliceDebut}
  by the visibility set 
  $\mathit{Obs}=\{
  male, center,  owner
  \}$ to obtain a ECM 
  $E= (S, f,\mathit{Obs})$.
  It is easy to check that the following holds:
  \begin{align*}
      \big(E,\{center,employed, owner \} \big)\models 
   \neg   \mathsf{K} \ \takevalue{0}
   \wedge 
     \neg   \mathsf{K} \ \takevalue{1}. 
  \end{align*}
  This means that, on the basis of its 
  partial knowledge
  of the applicant's identity,
  the system  does not know what to decide.

  However, the system  knows that if turns out
  that
  the applicant
  is employed then its application should  be accepted: 
    \begin{align*}
      \big(E,\{center,employed, owner \} \big)\models 
      \mathsf{K} \big( employed
   \rightarrow \takevalue{1} \big).
  \end{align*}
  Finally, 
   the classifier knows that if turns out
  that
  the applicant
  is employed, then
  the fact that she is employed and that she owns a property
  will abductively explain the decision
  to accept her application:
      \begin{align*}
         \big(E,\{center,employed, owner \} \big)\models 
 \mathsf{K} 
 \big(employed \rightarrow 
 \axp(employed \wedge owner, 1) \big). 
  \end{align*}
  \end{example}

\section{Conclusion }
 
 We have introduced
 a modal
 language 
 and a formal semantics
 that enable 
 us to capture the \emph{ceteris paribus}
 nature
 of binary classifiers. We have  formalized
 in the language a variety of notions
 which  are relevant for understanding a classifier's behavior including counterfactual
 conditional,
 abductive and contrastive explanation,
 bias. 
 We have provided two extensions
 that support reasoning
 about classifier change
 and a classifier's uncertainty
 about the actual
 instance to be classified.
 We have also
 offered axiomatics 
 and complexity
 results
 for our logical setting.

 We believe that the complexity
 results presented in the paper 
are exploitable in practice.
We have shown that
satisfiability
checking in the basic
setting and in its dynamic
and epistemic
extension is polynomial 
when finitely many variables are assumed. 
In the infinite-variable setting,
it becomes NEXPTIME-complete
and NP-complete when 
restricting to the language 
in which 
the only primitive 
modal operator
is the universal modality $[\emptyset]$.
In future work, we plan 
(i) to
find a number of satisfiability
preserving translations
from our modal languages
to
the modal logic S5 and then
from S5 to
propositional logic
using existing techniques
\cite{DBLP:conf/aaai/CaridroitLBLM17},
and (ii)
to 
exploit  SAT solvers
for automated verification
and generation  
of 
explanations and biases 
in binary classifiers.

Another direction of future research
is the generalization
of the epistemic extension
given in Section
\ref{epext}
to the multi-agent case.
The idea is to conceive classifiers
as agents and to be able
to represent both
the agents'  uncertainty 
about the instance to be classified
and their knowledge and uncertainty about
other agents' knowledge and  uncertainty 
(i.e., higher-order knowledge
and uncertainty). 
Similarly, we plan to investigate more in depth classifier dynamics 
we briefly discussed in Section
\ref{dynsection}.
The idea is to see them as learning 
dynamics.
Based on this idea, we plan 
to study the problem
of finding a
sequence of update operations
guaranteeing that 
the classifier will be able to make 
approximate 
decisions for a given set of instances.

Finally, all classifiers we handle in
this paper are essentially ``white box'', in the sense that
we have perfect knowledge of them, so that we can compute their explanations.
However, ``black box'' classifiers are the most interesting ones to XAI. In \cite{LiuLoriniBlackBox} we conceived a ``black box'' 
classifier 
as an agent's uncertainty among many possible ``white box'' classifiers. We represented it by extending our language with a modal operator
ranging over all possible functions
which are compatible with the agent's partial knowledge. All 
notions
of explanation 
we defined in this  paper 
can be generalized
to the ``black box'' setting. 
However, there are some important differences
between the two settings. For instance, in a ``black box'' classifier AXp does not always exist, as we showed in \cite{LiuLoriniBlackBox}, which contradicts Proposition \ref{prop: axp always when findef}. 

\section*{Acknwoledgments}
This work is supported by the ANR-3IA Artificial and Natural Intelligence Toulouse Institute (ANITI).

\appendix 

\section{Tecnical annex}

This technical annex contains a selection 
of proofs of the results
given in the paper.

\subsection{Proof of Proposition \ref{prop: X-def definable}}
\begin{proof}
    Suppose $C$ is $X$-definite but $(C, s) \models \neg \mathsf{Defin}(X)$, which means that $\exists c \in \val$ s.t. $(C, s) \models \neg =\hspace{-0.1cm}(X, \takevalue{c}) $. W.l.o.g., we assume that $(C, s) \models \neg \allins (\neg \takevalue{c} \to [X] \neg \takevalue{c})$. That is to say, $\exists s' \in S$, s.t. $f(s') \neq c$ but $(C, s') \models \langle X \rangle \takevalue{c}$. The latter indicates that $\exists s'' \in S$, s.t. $s'' \cap X = s'  \cap X$ but $f(s'') = c$, which violates $X$-definiteness.
    
    Let $ (C, s) \models \mathsf{Defin}(X) $, and assume $f(s) = c$ Then since $(C, s) \models \allins (\takevalue{c} \to [X] \takevalue{c})$, we have $\forall s' \in S$ if $s' \cap X = s \cap X$ then $f(s') = c = f(s)$, which is what $X$-definiteness says.
\end{proof}

\subsection{Proof of Theorem \ref{theo: CM and DM}}

\begin{proof}
    For the left to right direction, given a CM $C = (S, f)$ and $s_0 \in S$ s.t. $(C, s_0) \models \phi$,
    we construct a DM $M^\flat = (W^\flat,
     (\equiv_X^\flat )_{X \finsubseteq \atm_0}, V^\flat)$ as follows
    \begin{itemize}
        \item $W^\flat = S $
        \item $s \equiv_X^\flat s'$ if $s \cap X = s' \cap X$
        \item $V^\flat(s) = s \cup \{\takevalue{f(s)}\}$.
    \end{itemize}
    It is easy to check that $M^\flat$ is indeed a DM and $(M^\flat, s_0) \models \phi$.
    
    For the other direction, given a DM $\dm$ and $w_0 \in W$ s.t. $(M, w_0) \models \phi$,
    we construct a CM $C^\sharp = (S^\sharp, f^\sharp)$ as follows 
    \begin{itemize}
        \item $S^\sharp = \{V_{\atm_0}(w) : w \in W\}$
        \item $\forall V_{\atm_0}(w) \in S^\sharp, f^{\sharp}(V_{\atm_0}(w)) = c$, if $V_{\dec} (w)= \{\takevalue{c}\}$.
    \end{itemize}
    It is routine to check that $C^\sharp$ is a CM, and $(C^\sharp, V_{\atm_0}(w_0) ) \models \phi$.
    
\end{proof}

\subsection{Proof of Theorem \ref{theo:comp0}}\label{proofTheoComp}
\begin{proof}
The proof is conducted by constructing the canonical model.

\begin{definition}[Theory]
    A set of formulas $\Gamma$ is said to be a \emph{$\bcl$-theory} if it contains all theorems of $\bcl$ and is closed under $\ref{rule:MP}$ and $\ref{rule:Necbox}$.
    It is said to be a \emph{consistent} $\bcl$-theory if it is a theory and $\bot \notin \Gamma$.
    It is said to be a \emph{maximal consistent} $\bcl$-theory (MCT for short), if it is a consistent theory and for all consistent theory $\Gamma'$, if $\Gamma \subseteq \Gamma'$ then $\Gamma = \Gamma'$.
\end{definition}

\begin{lemma}[Lindenbaum-type]
    Let $\Delta$ be a consistent $\bcl$-theory and $\phi \notin \Delta$ Then, there is a maximal consistent $\bcl$-theory $\Gamma$ s.t. $\Delta \subseteq \Gamma$ and $\phi \notin \Gamma$.
\end{lemma}
The proof is standard and omitted (see, e.g. \cite[p. 197]{Bla01} ).

\begin{definition}[Canonical model]
	The canonical decision model $\M = ( W^c,
	  (\equiv_X^c )_{X \finsubseteq \atm_0}, V^c)$ is defined as follows \begin{itemize}
		\item $W^c = \{\Gamma: \Gamma \text{ is a maximal consistent } \bcl \text{ theory.}\}$
		\item $\Gamma \equiv^c_X \Delta \iff \{[X]\phi: [X]\phi \in \Gamma\} = \{[X]\phi: [X]\phi \in \Delta\}$
		\item $V^c(\Gamma) = \{p: p \in \Gamma\}$
	\end{itemize}
\end{definition}
We omit the superscript $^c$ whenever there is no misunderstanding.

\begin{lemma}
     Let $\Gamma$ be an MCT. Then $[X]\phi \to \phi \in \Gamma$.
\end{lemma}
\begin{proof}
	
	
	Suppose $[X] \phi \to \phi \notin \Gamma$, then by the maximality of $\Gamma$ and $\mathbf{Red}_{[\emptyset]}$, we have $\bigwedge_{Y \subseteq X}\big(\conj{Y}{ X} \to [\emptyset](\conj{Y}{X} \to \phi) \big) \wedge \neg \phi \in \Gamma $.
	Since $\Gamma$ is maximally consistent, there is exactly one $Z \subseteq X$ s.t. $\conj{Z}{X} \in \Gamma$.
	By $\mathbf{MP}$ we have $\allins (\conj{Z}{X} \to \phi) \in \Gamma$, and by $\mathbf{K}_{[\emptyset]}$ and $\mathbf{MP}$ we have $\phi \in \Gamma$.
	But than $\Gamma$ is inconsistent, since $\phi \wedge \neg \phi \in \Gamma$.
	Hence the supposition fails, which means $[X]\phi \to \phi \in \Gamma$.
\end{proof}

\begin{lemma}\label{lem:indeed a D model}
	The canonical model $\M$ is indeed a decision model.
\end{lemma}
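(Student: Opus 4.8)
The plan is to verify, constraint by constraint, that the canonical decision model $\M$ satisfies $\mathbf{C1}$ (so that it is a \emph{ceteris paribus} model, Definition~\ref{def:cpmodel}) and $\mathbf{C2}$--$\mathbf{C5}$ (so that it is a decision model, Definition~\ref{def:dmodel}). Conditions $\mathbf{C2}$ and $\mathbf{C3}$ are immediate: \ref{ax:Leastx} is a theorem of $\bcl$, hence a member of every $\Gamma \in W^c$, so some $\vx \in \Gamma$ and $V_\dec(\Gamma) \neq \emptyset$; and by \ref{ax:Mosttx} together with \ref{rule:MP} no consistent $\Gamma$ can contain two distinct decision atoms.

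For $\mathbf{C1}$ the engine is the reduction axiom \ref{ax:Redbox}. Fix $\Gamma, \Delta \in W^c$ and $X \in \atmset$, and note that there is exactly one $Z \subseteq X$ with $\conj{Z}{X} \in \Gamma$, namely $Z = V_X(\Gamma)$. For the direction ``$\Gamma \equiv^c_X \Delta$ implies $V_X(\Gamma) = V_X(\Delta)$'': given $p \in X$ with $p \in \Gamma$, we have $p \in Z$, so $\conj{Z}{X} \rightarrow p$ is a tautology, $[\emptyset](\conj{Z}{X} \rightarrow p) \in \Gamma$ by \ref{rule:Necbox}, and \ref{ax:Redbox} then yields $[X] p \in \Gamma$; since $\Gamma$ and $\Delta$ share their $[X]$-formulas, $[X]p \in \Delta$, whence $p \in \Delta$ by the preceding lemma, and the symmetric argument with $\neg p$ covers $p \notin \Gamma$. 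Conversely, if $V_X(\Gamma) = V_X(\Delta) = Z$ and $[X]\phi \in \Gamma$, then \ref{ax:Redbox} and \ref{rule:MP} at the conjunct $\conj{Z}{X}$ give $[\emptyset](\conj{Z}{X} \rightarrow \phi) \in \Gamma$; assuming $[\emptyset]$-formulas are shared by $\Gamma$ and $\Delta$ (see the last paragraph), this formula lies in $\Delta$ too, and since $\conj{Z}{X} \in \Delta$ while every conjunct of the right-hand side of \ref{ax:Redbox} indexed by $Y \neq Z$ holds vacuously in $\Delta$, we recover $[X]\phi \in \Delta$; hence $\Gamma \equiv^c_X \Delta$.

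Conditions $\mathbf{C4}$ and $\mathbf{C5}$ follow the same pattern, from \ref{ax:defX} and \ref{ax:compX} respectively. If $V_{\atm \setminus \dec}(\Gamma) = V_{\atm \setminus \dec}(\Delta) = Y$ and $\vx \in \Gamma$, then $\conj{Y}{\atm \setminus \dec} \in \Gamma$, so \ref{ax:defX} and \ref{rule:MP} put $[\emptyset](\conj{Y}{\atm \setminus \dec} \rightarrow \vx)$ in $\Gamma$, hence in $\Delta$, and with $\conj{Y}{\atm \setminus \dec} \in \Delta$ and \ref{ax:Tbox} we obtain $\vx \in \Delta$; by symmetry $V_\dec(\Gamma) = V_\dec(\Delta)$, which is $\mathbf{C4}$. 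For $\mathbf{C5}$, \ref{ax:compX} gives $\langle \emptyset \rangle \conj{X}{\atm \setminus \dec} \in \Gamma$ for every $X \subseteq \atm \setminus \dec$, and the standard existence lemma for the $\mathsf{S5}$ modality $[\emptyset]$ (using \ref{ax:Kbox}, \ref{rule:Necbox}, \ref{ax:4box} and \ref{ax:Bbox}) produces $\Delta \in W^c$ with $\Gamma \equiv^c_\emptyset \Delta$ and $\conj{X}{\atm \setminus \dec} \in \Delta$, i.e.\ $V_{\atm \setminus \dec}(\Delta) = X$.

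The genuinely delicate point, which I expect to be the main obstacle, is that $[\emptyset]$ is a \emph{universal} modality: $\mathbf{C1}$ specialized to $X = \emptyset$ forces $\equiv^c_\emptyset = W^c \times W^c$, and the arguments above for the converse of $\mathbf{C1}$, for $\mathbf{C4}$, and for $\mathbf{C5}$ all rely on $[\emptyset]$-formulas being common to all worlds of $\M$. This does not hold across the set of all maximal consistent theories, so $\M$ has to be understood as the restriction of the full canonical model to a single $\equiv^c_\emptyset$-cluster --- the one containing a fixed $\Gamma_0$ chosen so as to contain whatever consistent formula is to be satisfied. The axioms \ref{ax:Tbox}, \ref{ax:4box} and \ref{ax:Bbox} make $\equiv^c_\emptyset$ an equivalence relation, so this cluster is well-defined, is closed under every $\equiv^c_X$, and the verifications of $\mathbf{C1}$--$\mathbf{C5}$ above go through on it unchanged; passing to this submodel costs nothing, since satisfiability is preserved under generated submodels.
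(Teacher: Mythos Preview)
Your proposal is correct and follows the same broad outline as the paper's proof: verify each constraint $\mathbf{C1}$--$\mathbf{C5}$ by invoking the matching axiom (\ref{ax:Redbox} for $\mathbf{C1}$, \ref{ax:Leastx}/\ref{ax:Mosttx} for $\mathbf{C2}$/$\mathbf{C3}$, \ref{ax:defX} for $\mathbf{C4}$, \ref{ax:compX} for $\mathbf{C5}$).

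Where you differ from the paper is in completeness and care. The paper's argument for $\mathbf{C1}$ only treats one direction (from $\Gamma \equiv^c_X \Delta$ to $V_X(\Gamma)=V_X(\Delta)$) and dispatches $\mathbf{C2}$--$\mathbf{C5}$ in a single sentence. More importantly, the paper defines $W^c$ as the set of \emph{all} maximal consistent $\bcl$-theories and never restricts to a $\equiv^c_\emptyset$-cluster. You are right that this is a genuine issue: for $X=\emptyset$, condition $\mathbf{C1}$ forces $\equiv^c_\emptyset$ to be the universal relation, which fails on the full canonical model (two MCSs arising from distinct classifiers disagree on formulas of the form $[\emptyset](\conj{Y}{\atm\setminus\dec}\to\vx)$). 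Your fix---passing to the generated submodel given by one $\equiv^c_\emptyset$-equivalence class, using \ref{ax:Tbox}, \ref{ax:4box}, \ref{ax:Bbox} to see that this is an equivalence relation and that the cluster is closed under every $\equiv^c_X$---is the standard remedy and makes the converse direction of $\mathbf{C1}$, as well as your arguments for $\mathbf{C4}$ and $\mathbf{C5}$, go through. So your version is not merely a different route; it actually closes a gap that the paper leaves open.
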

\begin{proof}
	Check the conditions one by one. For \textbf{C1}, we need show $\Gamma \equiv^c_X \Delta$, if $\forall p, p \in V(\Gamma) \cap X$ implies $p \in V(\Delta)$. Suppose not, then w.l.o.g. we have some $q \in V(\Gamma) \cap X, q \notin V(\Delta)$, by maximality of $\Delta$ namely $\neg q \in \Delta$. However, we have $[q]q \in \Gamma$, for $q \to [q]q$ is a theorem, and by definition of $\equiv^c_X, [q]q \in \Delta$, hence $q \in \Delta$, since $[q]q \to q \in \Delta$. But now we have a contradiction.
	\textbf{C2-4} hold obviously due to axioms $\mathbf{AtLeast}, \mathbf{AtMost}$, $\mathbf{Def}$ and $\mathbf{Funct}$ respectively.
\end{proof}

\begin{lemma}[Existence]
    Let $\M = ( W^c,   (\equiv_X^c )_{X \finsubseteq \atm_0} , V^c)$ be the canonical model, $\Gamma$ be an MCT. Then, if $\someins \phi \in \Gamma$ then $\exists \Gamma'  \in W^c$ s.t. $\Gamma \equiv_{\emptyset}^c \Gamma'$ and $\phi \in \Gamma'$.
\end{lemma}

The proof is following the same line in e.g. \cite[p. 198-199]{Bla01} and omitted. 

\begin{lemma}[Truth]\label{lem:Truth lemma for D model}
    Let $\mathfrak{M}$ be the canonical model, $\Gamma$ be an MCT, $\phi \in \mathcal{L}(\atm_0)$. Then
	$\M, \Gamma \models \phi \iff \phi \in \Gamma$.
\end{lemma}
\begin{proof}
	By induction on $\phi$. We only show the interesting case when $\phi$ takes the form $[X]\psi$. 
	
	For $\Longleftarrow$ direction, if $[X]\psi \in \Gamma$, since for any $\Delta \equiv_X \Gamma$, $[X]\psi \in \Delta$, then thanks to $ [X]\psi \to \psi \in \Delta$ we have $\psi \in \Delta$. By induction hypothesis this means $\Delta \models \psi$, therefore $\Gamma \models [X] \psi$.
		
	For $\Longrightarrow$ direction, suppose not, namely $[X]\psi \notin \Gamma$. Then consider a theory $\Gamma' = \{\neg \psi\} \cup \{[X]\chi: [X]\chi \in \Gamma\}$. It is consistent since $\psi \notin \Gamma$. Then take any $\Delta \in W$ s.t. $\Gamma' \subseteq \Delta$. We have $\Delta \equiv_X \Gamma$, but $\Delta \nvDash \psi$ by induction hypothesis. However, this contradicts $\Gamma \models [X]\psi$.
\end{proof}

Now the completeness of $\mathbf{DM}$ w.r.t. $\bcl$ is a corollary of Lemma \ref{lem:indeed a D model} and \ref{lem:Truth lemma for D model}.

\end{proof}

\subsection{Proof of Theorem \ref{theo: QDM and finite-QDM}}

\begin{proof}
    Let $\dm$ be a QDM and $w_0 \in W$ s.t. $(M, w_0) \models \phi$.
    Let $\mathit{sf}(\phi)$ be the set of all subformulas of $\phi$ and let $\mathit{sf}^+(\phi) = \mathit{sf}(\phi) \cup \dec$.
    Moreover, $\forall v, u \in W$, we define $v \simeq u \iff$ $\forall \psi \in \mathit{sf}^+(\phi)$, $(M, v) \models \psi$ iff $(M, u) \models \psi$.
    Finally, we define $[v] = \{u \in W: v \simeq u\}$.
    
    
    Now we construct a filtration through $\mathit{sf}^+(\phi)$, $M' = (W',
     (\equiv_X' )_{X \finsubseteq \atm_0},  V')$ as follows 
    \begin{itemize}
        \item $W' = \{[v]: v \in W\}$
        \item $\forall X \finsubseteq \atm_0$, $[v] \equiv_X' [u]$, iff $V_X'([v]) = V_X'([u])$
        \item $V'([v]) = V_{\mathit{sf}^+(\phi) \cap \atm_0}(v)$
    \end{itemize}
    $M'$ is indeed a filtration.
    We need show that it satisfies the two conditions. 
    
    1) $v \equiv_X u \iff V_X(v) = V_X(u) \Longrightarrow V_X'([v]) = V_X'([u])$ $\iff [v] \equiv_X' [u]$.
    Suppose $v \equiv_X u$. By construction of $V'$, $\forall p \in X \cap \mathit{sf}^+(\phi), p \in V_X'([v]) p \in V(v) \iff p \in V(u) \iff p \in V_X'([u])$, and $\forall p \in X \setminus \mathit{sf}^+(\phi), p \notin V_X'([v])$ and $p \notin V_X'([u])$.
    As a result, $V_X'([v]) = V_X'([u])$ which means $[v] \equiv_X' [u]$.
    
    2) If $[v] \equiv_X' [u]$, then $\forall [X] \psi \in \mathit{sf}^+(\phi)$: if $(M, v ) \models [X] \psi$ then $(M, u) \models \psi$.
    The crucial point is that $\forall v, v' \in [v], \forall u, u' \in [u]$, $\forall [X] \psi \in \mathit{sf}^+(\phi)$, if $[v] \equiv_X' [u]$, then $v \equiv_X v' \equiv_X u \equiv_X u'$ by the definitions of $V'$ and $\simeq$.
    Hence by satisfaction relation of $M$ we have if $(M, v) \models [X] \psi$ then $(M, u) \models \psi$.
    
    Moreover, $M'$ is a finite-QDM.
    For $\mathbf{C1}$ 
        it is given as the definition of $V'$.
    $\mathbf{C2}$ and $\mathbf{C3}$ hold because of $\mathit{sf}^+(\phi) = \mathit{sf}(\phi) \cup \dec$.
    
    Finally, we need prove $(M, w_0) \models \phi$ iff $(M', [w_0]) \models \phi$.
    We only show when $\phi$ takes the form $[X]\psi$. 
    Given $(M, w_0) \models [X]\psi$, i.e. $\forall v \in W$, if $w_0 \equiv_X v$ then $(M, v) \models \psi$.
    By definitions of $\equiv_X'$ and $\mathbf{C1}$ we have $V_X'([w_0]) = V_X'([v])$, by induction hypothesis $(M', [v]) \models \psi$, which means $(M', [w_0]) \models [X] \psi$.
    If $(M', [w_0]) \models [X] \psi$, i.e. $\forall [v] \in W'$, if $[v] \equiv_X' [w_0]$ then $(M', [v]) \models \psi$. 
    Then by definitions of $V'$ and $\simeq$ we have $w_0 \equiv_X v$, by induction hypothesis $(M, v) \models \psi$.
    \end{proof}

\subsection{Proof of Theorem \ref{theo: finite-QDM and finite-DM}}

\begin{proof}
    The right to left direction is obvious since any finite-DM is a finite-QDM.
    For the other direction, suppose there is a finite-QDM $\dm$ and $w \in W$ s.t. $(M, w) \models \phi$.
    Since $\atm_0$
    is infinite, 
    we can construct an injection $\iota: W \longrightarrow \atm_0 \setminus \atm(\phi)$.
    Then, we construct a finite-DM $M' = (W',  (\equiv_X'  )_{X \finsubseteq \atm_0}
    , V')$ as follows 
    \begin{itemize}
        \item $W' = W$
        \item $w \equiv_X'  v$ iff $V'_X(w) = V'_X(v)$
        \item $V'(w) = (V(w) \cup \{\iota(w)\}) \setminus \{p: \exists v \in W, v \neq w \ \&\ \iota(v) = p\} $.
    \end{itemize}
    It is easy to check that $M'$ is indeed a finite-DM.
    By induction we show that $(M', w) \models \phi$.
    When $\phi$ is some $p$, we have $V(w) = V'(w)$ since the injection $\iota$ has nothing to do with $\phi$. The case of $\takevalue{c}$ is the same.
    The Boolean cases are straightforward. 
    Finally when $\phi$ takes form $[X]\psi$. Again since $\iota$ does not change valuation in $\phi$, we have $\forall v \in W, V_X(v) = V'_X(v)$.
    Hence we have $(M, w) \models [X] \psi \iff \forall v \in W, $ if $V_X(w) = V_X(v)$ then $(M, v) \models \psi$ $\iff\forall v \in W, $ if $V'_X(w) = V'_X(v)$ then $(M', v) \models \psi \iff (M', w) \models [X] \psi$.
\end{proof}

\subsection{Proof of Theorem \ref{theo:comp2}}
\begin{proof}

    Suppose
$\mathit{Atm}_0  $ is finite
and fixed. 
In order to determine
whether a formula $\varphi$
is satisfiable
for the class 
$\mathbf{CM}$,
we are going to verify whether
$\varphi $
is satisfied in each CM, 
by doing this sequentially
one CM after the other.
The corresponding algorithm runs in
polynomial time in the size of
$\varphi $ since:
(i) 
there is a 
finite, constant number of
CMs and 
(ii) model checking for the language 
$\mathcal{L} (\mathit{Atm})$  relative to a pointed CM
is  polynomial.
This means that, when
$\mathit{Atm}_0  $ is finite and fixed, satisfiability  checking 
has the same complexity as model checking.
Regarding  (i),
the finite, constant number
of CMs in 
the class
$\mathbf{CM}$
is
$ \sum_{S \subseteq 2^{ \mathit{Atm}_0 }} |\val|^{|S |}  $. Indeed, 
for every $S \subseteq 2^{\atm_0}$,
we consider the number
of functions from $S$
to $\val$. 
Regarding (ii), 
it is easy to build a model checking algorithm running in polynomial time. It is sufficient to adapt the well-known  ``labelling'' model checking algorithm
for the basic multimodal logics
and CTL 
\cite{ClarkeSchlingloff}. 
 The general  idea of the algorithm
  is to   label the states
  of a finite model 
  step-by-step
with sub-formulas of the formula $\varphi$ to be checked, starting
from the smallest
ones, the atomic propositions appearing in $\varphi$. At each step, a formula should be
added as a label to just those states of the model at which it is true.

\end{proof}

\subsection{Proof of Theorem \ref{theo:comp3}}
\begin{proof}
As for NEXPTIME-hardness,
in \cite{LoriniCETERISPARIBUS}
the following 
\emph{ceteris paribus}
modal
language,
noted  $\mathcal{L}_{\mathsf{CP}}(\mathit{Prop}  )$,
is considered 
with
 $\mathit{Prop}$
 a countable set of atomic propositions:
 \begin{center}\begin{tabular}{lcl}
 $\varphi$  & $::=$ & $ p \mid 
  \neg\varphi \mid \varphi\wedge\varphi \mid [X]\varphi,
$
\end{tabular}\end{center}
where $p$
ranges over 
$\mathit{Prop}$
and $X$ is
a finite set of atomic
propositions from
$\mathit{Prop}$. 
Formulas
for this language
are  interpreted relative to  a \emph{simple
model} $S \subseteq 2^{ \mathit{Atm}_0 }  $
and a state $s \in S$
in the expected way
as follows (we omit boolean cases
since they are interpreted in the usual way):
 $(S,s) \models p \text{ iff  }p \in s$; 
 $(S,s) \models [X]\varphi
 \text{ iff  }
\forall s' \in S: \text{ if  }
s\cap X =s'\cap X \text{ then }(S,s') \models \varphi$. 
It is proved that,
when $\mathit{Prop} $ is
countably infinite, 
satisfiability
checking for formulas 
in $\mathcal{L}_{\mathsf{CP}}(\mathit{Prop}  )$
relative to the class
$\mathbf{SM}$
of simple models
is
NEXPTIME-hard \cite[Lemma 2 and Corollary 2]{LoriniCETERISPARIBUS}. 
It follows that satisfiability checking
for formulas in our language $\mathcal{L} (\mathit{Atm})$
with $\mathit{Atm}_0 $
countably infinite  is
NEXPTIME-hard too.

As for membership,
let $tr$
be the following translation 
from the language 
$\mathcal{L}  (\mathit{Atm})$
to the language 
$\mathcal{L}_{\mathsf{CP}} \big(\mathit{Atm}_0
\cup  \{p_{\takevalue{c} }
:  c \in \val 
\}\big)$:
\begin{align*}
  &  tr(p)=p,\\
    &  tr(\takevalue{c} )=p_{\takevalue{c} },\\
  & tr(\neg \varphi)=\neg tr( \varphi),\\
    & tr( \varphi
    \wedge \psi )= tr( \varphi) \wedge 
    tr( \psi),\\
            & tr( [X] \varphi  )=[X] tr( \varphi ) .
\end{align*}
By induction on the structure of $\varphi $,
it is routine to verify
that $\varphi \in 
\mathcal{L}  (\mathit{Atm})$
is satisfiable for the class
$\mathbf{QDM}$
of Definition \ref{def:QDM}
if and only if 
$
[\emptyset]
\big( \varphi_1 \wedge \varphi_2  \big)
\wedge 
 tr(\varphi) $
 is satisfiable for 
 the class $\mathbf{SM}$
 of simple models,
 with
 \begin{align*}
 \varphi_1=_{\mathit{def}}&\bigvee_{c \in\mathit{Val}}p_{\takevalue{c} }, \\
\varphi_2=_{\mathit{def}}&
\bigwedge_{c,c' \in \mathit{Val}: c\neq c' }
\big(p_{\takevalue{c} } \rightarrow \neg 
p_{\takevalue{c'} }
\big).
\end{align*}
 Hence,
 by Theorem \ref{theo:QDMequivCM}
 we have that, when 
 $\mathit{Atm}_0 $ is
countably infinite, 
 $\varphi \in 
\mathcal{L}  (\mathit{Atm})$
is satisfiable for the class
$\mathbf{CM}$
of classifier models
if and only if 
$
[\emptyset]
\big( \varphi_1 \wedge \varphi_2  \big)
\wedge 
 tr(\varphi) $
 is satisfiable for 
 the class $\mathbf{SM}$
 of simple models. 
 Since the translation $tr$
 is linear 
 and  satisfiability
 checking
 for formulas 
in  $\mathcal{L}_{\mathsf{CP}}  \big(\mathit{Atm}_0
\cup  \{p_{\takevalue{c} }
:  c \in \val 
\}\big)$
 relative to the class $\mathbf{SM}$
 is in NEXPTIME
 in the infinite-variable case
 \cite[Lemma 2 and Corollary 1]{LoriniCETERISPARIBUS},
 checking
 satisfiability
 of formulas
 in 
 $\mathcal{L}  (\mathit{Atm})$ relative
 to the class
 $\mathbf{CM}$ is in NEXPTIME too,
 with $\atm_0$
  countably infinite.

\end{proof}

\subsection{Proof of Theorem \ref{theo:comp5}}
\begin{proof}
NP-hardness
follows from the NP-harndess
of propositional logic.

In order to prove NP-membership,
we can use the translation given
in the proof of 
Theorem
\ref{theo:comp3}
to give a polynomial
reduction of satisfiability
checking
of formulas in $\mathcal{L}^{ \{ [\emptyset] \}} (\mathit{Atm})$ 
relative to 
$\mathbf{CM}$ 
to 
satisfiability
checking in the  modal logic
S5. The latter problem
is known to be in NP 
in the 
infinite-variable case \cite{ladner1977computational}. 
\end{proof}

\subsection{Proof of Proposition \ref{prop:cf in cp}}
\begin{proof}
    For the right direction, we have $\closest{C}{s}{\phi}{X} \subseteq ||\psi||_C$ from the antecedent.
    Suppose towards a contradiction that the consequent does not hold. Then, $\exists k \in \{0, \dots, |X|\}, Y_1, Y_2 \subseteq X $ with $|Y_1| = |Y_2| = k$, s.t. $(C, s) \models \langle Y_1 \rangle \phi \wedge \bigwedge_{Y \subseteq X: k < |Y|} [Y] \neg \phi \wedge \langle Y_2 \rangle (\phi \wedge \neg \psi)$. 
    The last conjunct means that $\exists s' \in S, s' \cap X = s \cap X = Y_2$ and $(C, s') \models \phi \wedge \neg \psi$.
    But the conjuncts together guarantee that $s' \in \closest{C}{s}{\phi}{X}$, because $\prox{C}{s}{s'}{X} = k$, and it is an argmax by definition of $\closest{C}{s}{\phi}{X}$. It is the desired contradiction, since $s' \notin ||\psi||_C$.
    
    For the other direction, we need show $ \closest{C}{s}{\phi}{X} \subseteq ||\psi||_C$, given the antecedent.
    Suppose the opposite towards a contradiction. Then by definition, $\exists s^* \in \closest{C}{s}{\phi}{X}, s' \notin ||\psi||_C$.
    Let $s \cap X = s^* \cap X = Y^*$, and $\prox{C}{s}{s'}{X} = k^*$.
    Then we have $(C, s) \models \maxproxdef{\phi}{X}{k^*} \wedge \langle Y^* \rangle (\phi \wedge \neg \psi)$, which contradicts the antecedent.
    To see that, notice the second conjunct is because of $(C, s^*) \models \phi \wedge \neg \psi$, and the first conjunct because of $\prox{C}{s}{s^*}{X} = k^*$ and $s^* \in \closest{C}{s}{\phi}{X}$. 
\end{proof}

\subsection{Proof of Proposition \ref{prop: apprDec}}
\begin{proof}
    The first validity is obvious, since if $\closest{C}{s}{\phi}{X} \subseteq ||\takevalue{c}||_C$ then $\closest{C}{s}{\phi}{X} \nsubseteq ||\takevalue{c'}||_C$ given $c' \neq c$.
    For the second validity, notice that $\{s\} = \closest{C}{s}{\phi}{{\atm_0}} $, if $(C, s) \models \phi $. Hence if $(C, s) \models \takevalue{c}$, then we have $\closest{C}{s}{\bigvee_{c' \in \val: c' \neq ?}}{\atm_0} = \{s\} \subseteq ||\takevalue{c}||_C$.
\end{proof}

\subsection{Proof of Proposition \ref{prop: alternative AXp def}}
\begin{proof}
    Let $(C, s)$ be a pointed CM and $(C, s) \models \axp(\lambda, c)$, which directly gives us $(C, s) \models \lambda$. Now since $\lambda$ is an implicant of $c$, $(C, s) \models [\atm(\lambda)] \takevalue{c} $, for otherwise $\exists s'$, s.t. $(C, s') \models \lambda \wedge \neg \takevalue{c}$; and since $\lambda$ is prime, we have $(C, s) \models \bigwedge_{p \in \atm(\lambda)} \langle \atm(\lambda) \setminus \{p\} \rangle \neg \takevalue{c})$, otherwise $\exists \lambda'$, s.t. $\lambda' \subset \lambda$ and $\lambda'$ is also an implicant of $c$. The other direction is proven in the same way and omitted.
\end{proof}


\subsection{Proof of Proposition \ref{prop: axp always when findef}}
\begin{proof}
    Suppose towards a contradiction that $C$ is finitely-definite, but $\exists c \in \val$, s.t. $\forall \lambda \in \mathit{Term}$, if $(C, s) \models \lambda$ then $(C, s) \models \neg \pimp(\lambda, c)$.
    That is to say,
    $\exists s_1 \in S$ s.t. $(M, s_1) \models \lambda$ but either $f(s_1) \neq c$ or $\exists s_2 \in S$ s.t. $\exists p \in \atm(\lambda)$, $s_1 \cap (\atm(\lambda \setminus \{p\}) = s_2 \cap (\atm(\lambda \setminus \{p\})$ but $f(s_2) \neq c$.
    Hence $C$ is neither $\atm(\lambda)$-definite nor $(\atm(\lambda \setminus \{p\})$-definite.
    Either case $C$ is not finitely-definite, since $\lambda$ is arbitrarily selected from $\mathit{Term}$.
\end{proof}

\subsection{Proof of Proposition \ref{prop:CXp&Counterfac}}
\begin{proof}
For the first validity, let $C = (S, f) \in \mathbf{CM}$ and $s \in S$ and suppose $(C, s) \models \cxp(\lambda, c)$. By definition of $\cxp(\lambda, c)$ we have $(C, s) \models \takevalue{c}$. We need to show $(C, s) \models \overline{\lambda} \Rightarrow \neg \takevalue{c}$.
By the antecedent, $\exists s' \in S$, s.t. $s \triangle s' = \atm(\lambda)$ and $f(s') \neq c$. It is not hard to show that $\closest{C}{s}{\overline{\lambda}}{\atm} = \{s'\}$. Therefore $(C, s) \models \overline{\lambda} \Rightarrow \neg \takevalue{c}$, since $\closest{C}{s}{\overline{\lambda}}{\atm_0} \subseteq ||\neg \takevalue{c}||_C$.
For the second validity, the right direction of the iff is a special case of the first validity. To show the left direction, from $\atm_0$-completeness and the counterfactual conditional we have $\exists s' \in S$, s.t. $s' \triangle s = \atm(l)$ and $\{s'\} = \closest{C}{s}{l}{\atm_0}$. Hence $(C, s) \models l \wedge \langle \atm_0 \setminus \atm(l) \rangle \neg \takevalue{c} \wedge [\atm_0] \takevalue{c}$, which is by definition $(C, s) \models \cxp(l, c)$. 
\end{proof}

\subsection{Proof of Proposition \ref{prop:Bias&CXp}}
\begin{proof}
We show that for any $C = (S, f) \in \mathbf{CM}$, both directions are satisfied in $(C, s)$ for some $s \in S$.
The right to left direction is obvious, since from the antecedent we know there is a property $\lambda'$ s.t. $\exists s' \in S, s \triangle s' = \atm(\lambda') \subseteq \pf$ and $(C, s') \models \neg \takevalue{c}$, which means $(C, s) \models \bias(c)$.
The other direction is proven by contraposition. Suppose for any $\lambda$ s.t. $\atm(\lambda) \subseteq \pf$, $(C, s) \models \neg \cxp(\lambda, c)$, then it means $\forall s' \in S$, if $s \triangle s' = \atm(\lambda)$, then $f(s') = c$, which means $(C, s) \models \neg \bias(c)$.
\end{proof}

\subsection{Proof of Theorem {\ref{theo:complepi}}}

\begin{proof}
    Suppose
$|\mathit{Atm}_0 | $ is finite.
As in the proof of Theorem
\ref{theo:comp2},
we can show that the size of 
the model
class
$\mathbf{ECM}$
is bounded by some fixed integer. 
Thus,  in order to determine
whether a formula $\varphi$
$ \mathcal{L}^{\mathit{epi}} (\mathit{Atm})$  
is satisfiable
for this class,
it is sufficient to
repeat model checking
a number of times which is bounded by some integer. 
Model checking for
the language
$ \mathcal{L}^{\mathit{epi}} (\mathit{Atm})$  
with respect to a pointed ECM
is polynomial. 
\end{proof}

\bibliographystyle{plain}
\bibliography{Itisthis}

\begin{thebibliography}{10}

\bibitem{amgoud2022axiomatic}
Leila Amgoud and Jonathan Ben-Naim.
\newblock Axiomatic foundations of explainability.
\newblock In {\em 31st International Joint Conference on Artificial
  Intelligence (IJCAI 2022)}, 2022.

\bibitem{audemard2021computational}
Gilles Audemard, Steve Bellart, Louenas Bounia, Fr{\'e}d{\'e}ric Koriche,
  Jean-Marie Lagniez, and Pierre Marquis.
\newblock On the computational intelligibility of boolean classifiers.
\newblock In {\em Proceedings of the International Conference on Principles of
  Knowledge Representation and Reasoning}, volume~18, pages 74--86, 2021.

\bibitem{baltag2021simple}
Alexandru Baltag and Johan van Benthem.
\newblock A simple logic of functional dependence.
\newblock {\em Journal of Philosophical Logic}, 50(5):939--1005, 2021.

\bibitem{biran2017explanation}
Or~Biran and Courtenay Cotton.
\newblock Explanation and justification in machine learning: A survey.
\newblock In {\em IJCAI-17 workshop on explainable AI (XAI)}, volume 8(1),
  pages 8--13, 2017.

\bibitem{Bla01}
Patrick Blackburn, Maarten de~Rijke, and Yde Venema.
\newblock {\em Modal Logic}.
\newblock Cambridge University Press, Cambridge, Massachusetts, 2001.

\bibitem{borgida1985language}
Alexander Borgida.
\newblock Language features for flexible handling of exceptions in information
  systems.
\newblock {\em ACM Transactions on Database Systems (TODS)}, 10(4):565--603,
  1985.

\bibitem{DBLP:conf/aaai/CaridroitLBLM17}
Thomas Caridroit, Jean-Marie Lagniez, Daniel Le~Berre, Tiago de~Lima, and
  Valentin Montmirail.
\newblock A {SAT}-based approach for solving the modal logic s5-satisfiability
  problem.
\newblock In {\em Proceedings of the Thirty-First {AAAI} Conference on
  Artificial Intelligence (AAAI-17)}, pages 3864--3870. {AAAI} Press, 2017.

\bibitem{DBLP:conf/kr/CharrierHLMS16}
Tristan Charrier, Andreas Herzig, Emiliano Lorini, Faustine Maffre, and
  Fran{\c{c}}ois Schwarzentruber.
\newblock Building epistemic logic from observations and public announcements.
\newblock In {\em Proceedings of the Fifteenth International Conference on
  Principles of Knowledge Representation and Reasoning (KR 2016)}, pages
  268--277. {AAAI} Press, 2016.

\bibitem{ClarkeSchlingloff}
Edmund~M. Clarke and Bernd-Holger Schlingloff.
\newblock Model checking.
\newblock In Alan J.~A. Robinson and Andrei Voronkov, editors, {\em Handbook of
  automated reasoning}, pages 1635--1790. Elsevier, 2001.

\bibitem{cunningham2020k}
Padraig Cunningham and Sarah~Jane Delany.
\newblock K-nearest neighbour classifiers - a tutorial.
\newblock {\em ACM Computing Surveys}, 54(6):1--25, 2022.

\bibitem{dalal1988investigations}
Mukesh Dalal.
\newblock Investigations into a theory of knowledge base revision: preliminary
  report.
\newblock In {\em Proceedings of the Seventh National Conference on Artificial
  Intelligence}, volume~2, pages 475--479. Citeseer, 1988.

\bibitem{DBLP:conf/ecai/DarwicheH20}
Adnan Darwiche and Auguste Hirth.
\newblock On the reasons behind decisions.
\newblock In {\em 24th European Conference on Artificial Intelligence ({ECAI}
  2020)}, volume 325 of {\em Frontiers in Artificial Intelligence and
  Applications}, pages 712--720. {IOS} Press, 2020.

\bibitem{dhurandhar2018explanations}
Amit Dhurandhar, Pin-Yu Chen, Ronny Luss, Chun-Chen Tu, Paishun Ting,
  Karthikeyan Shanmugam, and Payel Das.
\newblock Explanations based on the missing: Towards contrastive explanations
  with pertinent negatives.
\newblock In {\em Advances in Neural Information Processing Systems}, pages
  592--603, 2018.

\bibitem{Fagin1995}
Ronald Fagin, Yoram Moses, Joseph~Y Halpern, and Moshe~Y Vardi.
\newblock {\em Reasoning about Knowledge}.
\newblock MIT Press, 1995.

\bibitem{girard2016ceteris}
Patrick Girard and Marcus~Anthony Triplett.
\newblock Ceteris paribus logic in counterfactual reasoning.
\newblock In {\em Proceedings of the Fifteenth Conference on Theoretical
  Aspects of Rationality and Knowledge (TARK 2015)}, pages 176--193, 2016.

\bibitem{goodman1955fact}
Nelson Goodman.
\newblock {\em Fact, fiction, and forecast}.
\newblock Harvard University Press, 1955.

\bibitem{LoriniCETERISPARIBUS}
Davide Grossi, Emiliano Lorini, and Fran{\c{c}}ois Schwarzentruber.
\newblock The ceteris paribus structure of logics of game forms.
\newblock {\em Journal of Artificial Intelligence Research}, 53:91--126, 2015.

\bibitem{DBLP:journals/ai/Halpern95}
Joseph~Y. Halpern.
\newblock The effect of bounding the number of primitive propositions and the
  depth of nesting on the complexity of modal logic.
\newblock {\em Artificial Intelligence}, 75(2):361--372, 1995.

\bibitem{Halpern2016Actual}
Joseph~Y Halpern.
\newblock {\em Actual causality}.
\newblock MiT Press, 2016.

\bibitem{hempel1948studies}
Carl~G. Hempel and Paul Oppenheim.
\newblock Studies in the logic of explanation.
\newblock {\em Philosophy of science}, 15(2):135--175, 1948.

\bibitem{DBLP:conf/lori/HerzigLM15}
Andreas Herzig, Emiliano Lorini, and Faustine Maffre.
\newblock A poor man's epistemic logic based on propositional assignment and
  higher-order observation.
\newblock In {\em Proceedings of the 5th International Workshop on Logic,
  Rationality, and Interaction}, Lecture Notes in Computer Science, pages
  156--168. Springer, 2015.

\bibitem{ignatiev2020towards}
Alexey Ignatiev, Martin~C. Cooper, Mohamed Siala, Emmanuel Hebrard, and Joao
  Marques-Silva.
\newblock Towards formal fairness in machine learning.
\newblock In {\em International Conference on Principles and Practice of
  Constraint Programming}, pages 846--867. Springer, 2020.

\bibitem{ignatiev2020contrastive}
Alexey Ignatiev, Nina Narodytska, Nicholas Asher, and Joao Marques-Silva.
\newblock From contrastive to abductive explanations and back again.
\newblock In {\em International Conference of the Italian Association for
  Artificial Intelligence}, pages 335--355. Springer, 2020.

\bibitem{ignatiev2019abduction}
Alexey Ignatiev, Nina Narodytska, and Joao Marques-Silva.
\newblock Abduction-based explanations for machine learning models.
\newblock In {\em Proceedings of the Thirty-third AAAI Conference on Artificial
  Intelligence (AAAI-19)}, volume~33, pages 1511--1519, 2019.

\bibitem{kment2006counterfactuals}
Boris Kment.
\newblock Counterfactuals and explanation.
\newblock {\em Mind}, 115(458):261--310, 2006.

\bibitem{ladner1977computational}
Richard~E. Ladner.
\newblock The computational complexity of provability in systems of modal
  propositional logic.
\newblock {\em SIAM journal on computing}, 6(3):467--480, 1977.

\bibitem{LewisCounter}
David~K. Lewis.
\newblock {\em Counterfactuals}.
\newblock Harvard University Press, 1973.

\bibitem{lewis1979counterfactual}
David~K. Lewis.
\newblock Counterfactual dependence and time's arrow.
\newblock {\em No{\^u}s}, pages 455--476, 1979.

\bibitem{LewisCausalExplanation}
David~K. Lewis.
\newblock Causal explanation.
\newblock In {\em Philosophical Papers}, volume~2, pages 214--240. Oxford
  University Press, 1986.

\bibitem{LewisCausation}
David~K. Lewis.
\newblock Causation.
\newblock {\em Journal of Philosophy}, 70(17):556--567, 1995.

\bibitem{LiuLorini2021BCL}
Xinghan Liu and Emiliano Lorini.
\newblock A logic for binary classifiers and their explanation.
\newblock In P.~Baroni, C.~Benzm{\"{u}}ller, and Y.~N. W{\'{a}}ng, editors,
  {\em Logic and Argumentation - 4th International Conference, {CLAR} 2021,
  Hangzhou, China, 2021, Proceedings}, Lecture Notes in Computer Science, pages
  302--321. Springer, 2021.

\bibitem{LiuLoriniBlackBox}
Xinghan Liu and Emiliano Lorini.
\newblock A logic of ``black box'' classifier systems.
\newblock In {\em Logic, Language, Information, and Computation: 28th
  International Workshop, WoLLIC 2022, Iași, Romania, 2022, Proceedings},
  pages 158--174. Springer Nature, 2022.

\bibitem{mertes2022alterfactual}
Silvan Mertes, Christina Karle, Tobias Huber, Katharina Weitz, Ruben
  Schlagowski, and Elisabeth Andr{\'e}.
\newblock Alterfactual explanations--the relevance of irrelevance for
  explaining ai systems.
\newblock {\em arXiv preprint arXiv:2207.09374}, 2022.

\bibitem{miller2019explanation}
Tim Miller.
\newblock Explanation in artificial intelligence: Insights from the social
  sciences.
\newblock {\em Artificial intelligence}, 267:1--38, 2019.

\bibitem{miller2021contrastive}
Tim Miller.
\newblock Contrastive explanation: A structural-model approach.
\newblock {\em The Knowledge Engineering Review}, 36, 2021.

\bibitem{mittelstadt2019explaining}
Brent Mittelstadt, Chris Russell, and Sandra Wachter.
\newblock Explaining explanations in \text{AI}.
\newblock In {\em Proceedings of the 2019 conference on Fairness,
  Accountability, and Transparency}, pages 279--288, 2019.

\bibitem{mothilal2020explaining}
Ramaravind~K. Mothilal, Amit Sharma, and Chenhao Tan.
\newblock Explaining machine learning classifiers through diverse
  counterfactual explanations.
\newblock In {\em Proceedings of the 2020 Conference on Fairness,
  Accountability, and Transparency}, pages 607--617, 2020.

\bibitem{pearl2009causality}
Judea Pearl.
\newblock {\em Causality}.
\newblock Cambridge university press, 2009.

\bibitem{quine1955way}
Willard~V. Quine.
\newblock A way to simplify truth functions.
\newblock {\em The American mathematical monthly}, 62(9):627--631, 1955.

\bibitem{shi2020tractable}
Weijia Shi, Andy Shih, Adnan Darwiche, and Arthur Choi.
\newblock On tractable representations of binary neural networks.
\newblock {\em arXiv preprint arXiv:2004.02082}, 2020.

\bibitem{shih2018formal}
Andy Shih, Arthur Choi, and Adnan Darwiche.
\newblock Formal verification of bayesian network classifiers.
\newblock In {\em International Conference on Probabilistic Graphical Models},
  pages 427--438. PMLR, 2018.

\bibitem{sokol2019counterfactual}
Kacper Sokol and Peter~A. Flach.
\newblock Counterfactual explanations of machine learning predictions:
  opportunities and challenges for ai safety.
\newblock In {\em SafeAI@ AAAI}, 2019.

\bibitem{DBLP:journals/iandc/BenthemEK06}
Johan Van~Benthem, Jan Van~Eijck, and Barteld Kooi.
\newblock Logics of communication and change.
\newblock {\em Information and Computation}, 204(11):1620--1662, 2006.

\bibitem{DBLP:conf/aamas/HoekIW12}
Wiebe van~der Hoek, Petar Iliev, and Michael~J Wooldridge.
\newblock A logic of revelation and concealment.
\newblock In {\em Proceedings of the International Conference on Autonomous
  Agents and Multiagent Systems, (AAMAS 2012)}, pages 1115--1122. {IFAAMAS},
  2012.

\bibitem{DBLP:conf/atal/HoekTW11}
Wiebe Van Der~Hoek, Nicolas Troquard, and Michael~J Wooldridge.
\newblock Knowledge and control.
\newblock In {\em Proceedings of the 10th International Conference on
  Autonomous Agents and Multiagent Systems (AAMAS 2021)}, pages 719--726.
  {IFAAMAS}, 2011.

\bibitem{kooietalDEL2007}
Hans van Ditmarsch, Wiebe van Der~Hoek, and Barteld Kooi.
\newblock {\em Dynamic Epistemic Logic}, volume 337 of {\em Synthese Library}.
\newblock Springer, 2007.

\bibitem{DBLP:conf/atal/DitmarschHK05}
Hans~P van Ditmarsch, Wiebe van~der Hoek, and Barteld~P Kooi.
\newblock Dynamic epistemic logic with assignment.
\newblock In {\em Proceedings of the 4th International Joint Conference on
  Autonomous Agents and Multiagent Systems (AAMAS 2005)}, pages 141--148.
  {ACM}, 2005.

\bibitem{verma2020counterfactual}
Sahil Verma, John Dickerson, and Keegan Hines.
\newblock Counterfactual explanations for machine learning: A review.
\newblock {\em arXiv preprint arXiv:2010.10596}, 2020.

\bibitem{wachter2017counterfactual}
Sandra Wachter, Brent Mittelstadt, and Chris Russell.
\newblock Counterfactual explanations without opening the black box: Automated
  decisions and the gdpr.
\newblock {\em Harv. JL \& Tech.}, 31:841, 2017.

\bibitem{woodward2000explanation}
James Woodward.
\newblock Explanation and invariance in the special sciences.
\newblock {\em The British journal for the philosophy of science},
  51(2):197--254, 2000.

\bibitem{WoodwardBook2003}
James Woodward.
\newblock {\em Making Things Happen: a Theory of Causal Explanation}.
\newblock Oxford University Press, 2003.

\bibitem{woodward2003explanatory}
James Woodward and Christopher Hitchcock.
\newblock Explanatory generalizations, part i: A counterfactual account.
\newblock {\em No{\^u}s}, 37(1):1--24, 2003.

\bibitem{yang2016propositional}
Fan Yang and Jouko V{\"a}{\"a}n{\"a}nen.
\newblock Propositional logics of dependence.
\newblock {\em Annals of Pure and Applied Logic}, 167(7):557--589, 2016.

\end{thebibliography}

\end{document}